\newcommand{\qed}{\hspace{2mm}\nolinebreak\null\nolinebreak\hfill$\Box$}
\newtheorem{theorem}{Theorem}[section]
\newtheorem{defn}[theorem]{Definition}
\newtheorem{lemma}[theorem]{Lemma}
\newtheorem{prop}[theorem]{Proposition}
\newenvironment{proof}{\noindent {\bf Proof:}}{\bigskip}
\newcommand{\df}{=_{\text{df}}}
\newcommand{\dfi}{_{\text{df}}=}
\newcommand{\nat}{\mathbb{N}}
\newcommand{\mc}[1]{\mathcal{L}}
\newcommand{\singleton}[1]{\{{#1}\}}
\newcommand{\setof}[2]{\{{#1}\,|\,{#2}\}}
\newcommand{\twotuple}[2]{({#1},{#2})}
\newcommand{\pair}[2]{\twotuple{#1}{#2}}
\newcommand{\derives}[1]{\stackrel{#1}{\longrightarrow}}
\newcommand{\derivesL}[2]{\stackrel{#1}{\longrightarrow_#2}}
\newcommand{\TACS}{\textsc{TACS}}
\newcommand{\proc}{\text{$\mathcal{P}$}}
\newcommand{\terms}{\widehat{\proc}}
\newcommand{\act}{\text{$\mathcal{A}$}}
\newcommand{\var}{\text{$\mathcal{V}$}}
\newcommand{\nil}{\textbf{0}}
\newcommand{\res}[1]{\setminus {#1}}
\newcommand{\resset}[1]{\setminus \{{#1}\}}
\newcommand{\urgent}[1]{\mathcal{U}({#1})}
\newlength{\lrulename}
\newcommand{\Rule}[4]{\makebox[\lrulename]{{\rm #1}\hfill}
                      $\,\displaystyle\frac{#2}{#3}\,\,{#4}$}
\newcommand{\ftpo}{\raisebox{-0.7ex}{$\stackrel{\sqsupset}{\sim}$}}
\newcommand{\ftpoi}{\ftpo_{\textit{i}}}
\newcommand{\oneftpo}{\ftpo_{\textit{1}}}
\newcommand{\cftpo}{\ftpo_{\textit{c}}}
\newcommand{\twoftpo}{\ftpo_{\textit{2}}}
\newcommand{\nftpo}{\ftpo_{\textit{1-nv}}}
\newcommand{\nftpot}{\ftpo_{\textit{2-nv}}}
\newcommand{\nftpoi}{\ftpo_{\textit{i-nv}}}
\newcommand{\dftpo}{\ftpo_{\textit{1-dly}}}
\newcommand{\dftpot}{\ftpo_{\textit{2-dly}}}
\newcommand{\dftpoi}{\ftpo_{\textit{i-dly}}}
\newcommand{\uptopo}{\succeq}
\newcommand{\uptopotc}{\succeq^{+}}
\title{Robustness of a bisimulation-type faster-than preorder}
\author{Katrin Iltgen, Walter Vogler
\institute{Inst.\ f.\ Informatik,
Universit{\"a}t Augsburg\\
}
\email{\{katrin.iltgen, walter.vogler\}@informatik.uni-augsburg.de}
}
\begin{document}
\includepdfset{pages=-,noautoscale}
\maketitle  

\begin{abstract}
TACS is an extension of CCS where upper time bounds for delays can be specified.
Lüttgen and Vogler defined three variants of bismulation-type faster-than relations
and showed that they all three lead to the same preorder, demonstrating the
robustness of their approach. In the present paper, the operational semantics of
TACS is extended; it is shown that two of the variants still give the same
preorder as before, underlining robustness. An explanation is given why this
result fails for the third variant. It is also shown that another variant, which
mixes old and new operational semantics, can lead to smaller relations that
prove the same preorder.
\end{abstract}

\section{Introduction}
\label{sec:introduction}

To evaluate or compare the worst-case efficiency of asynchronous systems, it
is adequate to introduce upper time bounds for their actions; see
e.g.\ \cite{Lyn96} for distributed algorithms, or \cite{Vog2002} 
for Petri nets: since components can still be arbitrary fast, their 
relative speeds are indeterminate, i.e.\ one is truly dealing with 
asynchronous systems. In other words, everything that can happen when 
disregarding time can still happen in time zero; at the same time,
Zeno behaviour is not really relevant when studying worst-case efficiency.
In order to introduce upper time bounds, Milner’s CCS \cite{Mil89} is
extended by a clock prefix $\sigma$ in \cite{LV04} to obtain the 
process algebra TACS (\emph{Timed Asynchronous Communicating Systems}),
where $\sigma$ represents a potential delay of one time unit.
Since time is
treated as discrete, this means that e.g.\ process $\sigma.a.P$ can either
delay action $a$ (performing a unit time step, also denoted by $\sigma$,
leading to process $a.P$)
or skip the clock prefix and perform $a$ immediately; thus, possibly using
repeated clock prefixes, upper time bounds for communications can be specified.
An elegant type of (so-called naive) faster-than relations is introduced 
that corresponds to 
bisimulation on actions (related processes must have the same functionality)
and simulation on time steps (if the faster process performs a time step, 
i.e.\ the user has to wait, then the slower process must follow suit).

Furthermore, two variants of the faster-than relations (called delayed and 
indexed) are developed 
that are more complicated, but possibly more intuitive. As a validation 
of the first variant, it is shown that all three variants lead to the
same faster-than \emph{preorder}, demonstrating the robustness of the approach.
Furthermore, the coarsest precongruence contained in this faster-than preorder
is characterised, introducing a small modification regarding urgent actions,
and these ideas are translated to a weak setting, where $\tau$-actions are
invisible.

\medskip

In the present paper, we give new results to emphasise robustness further.
For this, we extend the operational semantics of TACS: while originally clock
prefixes could only be skipped when performing an action, we now also 
allow this when performing a time step. Upper time bounds correspond to 
the idea that there are unobservable activities on a lower level of
abstraction that can lead to varying delays; skipping a clock prefix during
a time step means that, during this time step, it becomes clear that
these activities will not lead to the maximal possible delay in this run.
For example, the process $P \equiv \sigma.\sigma.\sigma.a.\nil$ may now 
skip one $\sigma$-prefix when performing a time step and behave like 
$\sigma.a.\nil$ afterwards. This new behaviour seems more realistic; 
observe that it can become visible with a progress bar. It should be noted that, 
with this extension,
time determinism does not hold anymore; as a sort of compensation, we
gain transitivity of the $\sigma$-transition relation.

In our new setting, we look at textually the same three variants from \cite{LV04} mentioned
above, and show: two of the three variants lead to the same faster-than preorder
on TACS processes as the one in \cite{LV04}. This robustness result does not
hold for the third variant, which is an amortised bisimulation; see \cite{AKK05}
for a very similar idea. Our counter-example reveals that the idea behind
amortised bisimulation relies essentially on time determinism, so we cannot
really expect a coincidence result in this case. Robustness against small variations is often
regarded as a quality criterion, thus all in all our results further demonstrate the good quality of the TACS-approach.

We also show that analogous modifications as in \cite{LV04} lead to the 
same coarsest precongruence for our new operational semantics; the same is true for 
the weak setting of~\cite{LV04}, see~\cite{Ilt09}. But the bisimulation-like 
strong faster-than relations that serve as
witness for the faster-than precongruence are different for the old and
the new operational semantics: in some cases the first, 
in other cases the second type of relation can be smaller. As a final
contribution, we define a third type of so-called strong combined-faster-than relations
that mixes the old and
the new operational semantics and serves to demonstrate the same precongruence; 
this type of relations includes the other types, and sometimes allows a smaller relation
than any of the other two.

This paper is organised as follows.
The next section presents the syntax as well as the old and the new
operational semantics of the process algebra TACS. 
Moreover, we get familiar with the nature of the new $\sigma$-transitions and prove
their transitivity.
In Section 3, we compare the original time steps to those newly introduced 
here; the key concept is a syntactic faster-than relation from~\cite{LV04}. Subsequently, we demonstrate the robustness of the naive 
faster-than preorder against the transition extension in Section 4.
Section 5 proves the robustness of the delayed faster-than preorder, while
Section 6 demonstrates the defect of the extended indexed faster-than preorder.
In Section 7, we establish the robustness of the precongruence; we further introduce 
the strong combined-faster-than relations and give two example processes where 
such a relation can be much smaller than a relation of one of the other two types.
Finally, we draw a short conclusion in Section 8.
The preliminary version~\cite{Ilt09} of this paper contains a number of proof details
that are omitted here.


\section{TACS}
\label{sec:TACS} 

In this section, we introduce TACS~\cite{LV04} as an extension of CCS by the
clock prefix $\sigma$, representing a delay of up to one unit of time.
We define the operational semantics of TACS, including our new extension,
and we give some first results.


Let $\Lambda$ be a countable set of action names or ports $a,b,c$;
$\overline{\Lambda} \df\{\overline{a} | a \in \Lambda\}$ is the set of 
complementary action names $\overline{a},\overline{b},\overline{c}$, and
$\act \df \Lambda \cup \overline{\Lambda} \cup \{\tau\}$ is the set of 
all actions $\alpha,\beta,\gamma$, including the internal action $\tau$. 
As usual, $\overline{\overline{a}}\df a$ for all $a \in \Lambda$,
and an action~$a$ will communicate with its
complement~$\overline{a}$ to produce the internal action~$\tau$.
\medskip

A TACS term is defined as follows, where the operators have the usual
meaning:
\begin{equation*}
P \;\;::=\;\; \nil      \;\;|\;\;
                x           \;\;|\;\;
                \alpha.P  \;\;|\;\;
                \sigma.P  \;\;|\;\;
                P+P         \;\;|\;\;
                P|P         \;\;|\;\;
                P\res{L}   \;\;|\;\;
                P[f]        \;\;|\;\;
                \mu x. P 
\end{equation*}
where~$x$ is a \emph{variable} taken from a countably infinite
set~$\var$ of variables, $L \subseteq \act \setminus \singleton{\tau}$
is a finite \emph{restriction set}, and $f: \act \rightarrow \act$ is a
\emph{finite relabelling}.  A finite relabelling satisfies the properties
$f(\tau) = \tau$, $f(\overline{a}) = \overline{f(a)}$, and $|
\setof{\alpha}{f(\alpha) \not= \alpha} | < \infty$.  The set of all
terms is abbreviated by~$\terms$ and, for convenience, we define
$\overline{L} \df \setof{\overline{a}}{a \in L}$.  We use the standard
definitions for \emph{free} and \emph{bound} variables (where~$\mu x$
binds~$x$), and \emph{open} and \emph{closed} terms.  
$P[Q/x]$ stands for the term
that results when substituting every free occurrence of~$x$ in~$P$
by~$Q$.  A variable is called \emph{guarded} in a term if each
occurrence of the variable is in the scope of an action prefix.  We
require for terms of the form~$\mu x.  P$ that~$x$ is guarded in~$P$.
Closed, guarded terms are referred to as \emph{processes}, with the set
of all processes written as~$\proc$, and syntactic equality is denoted
by~$\equiv$.
\medskip


\begin{table}[htb]
\caption{Urgent action sets}
\begin{center}
$\begin{array}{@{}l@{\,}c@{\,}l@{\quad\!}
                    l@{\,}c@{\,}l@{\quad\!}l@{\,}c@{\,}l@{}}
\hline
\urgent{\sigma.P} & \df & \emptyset
&
\urgent{\nil} = \urgent{x} & \df & \emptyset
&
\urgent{P\res{L}} & \df & \urgent{P} \setminus (L\cup\overline{L})
\\
\urgent{\alpha.P} & \df & \singleton{\alpha}
&
\urgent{P+Q} & \df & \urgent{P} \cup \urgent{Q}
&
\urgent{P[f]} & \df & \setof{f(\alpha)}{\alpha \in \urgent{P}} 
\\
\urgent{\mu x. P} & \df & \urgent{P}
&
\urgent{P|Q} & \df & 
\multicolumn{4}{l}{%
  \urgent{P} \cup \urgent{Q} \cup 
  \setof{\tau}{\urgent{P} \cap \overline{\urgent{Q}} \not= \emptyset}
}
\\
\hline
\end{array}$
\end{center}
\label{table:urgent}
\end{table}
\begin{table}[htb]
\caption{Operational semantics for \TACS\ (action transitions)}
\begin{center}
\begin{tabular}{@{}l@{\qquad\;\;}l@{\qquad\;\;}l@{}}
\hline
\\
\Rule{Act}
     {-\!\!\!-}
     {\alpha.P \derives{\alpha} P}
     {}
&
\Rule{Pre}
     {P \derives{\alpha} P'}
     {\sigma.P \derives{\alpha} P'}
     {}
&
\Rule{Rec}
     {P \derives{\alpha} P'}
     {\mu x. P \derives{\alpha} P'[\mu x. P /x]}
     {}
\\ \
\\
\Rule{Sum1}
     {P \derives{\alpha} P'}
     {P+Q \derives{\alpha} P'}
     {}
&
\Rule{Com1}
     {P \derives{\alpha} P'}
     {P|Q \derives{\alpha} P'|Q}
     {}
&
\Rule{Com3}
     {P \derives{a} P' \quad
      Q \derives{\overline{a}} Q'}
     {P|Q \derives{\tau} P'|Q'}
     {}
\\ \
\\
\Rule{Rel}
     {P \derives{\alpha} P'}
     {P[f] \derives{f(\alpha)} P'[f]}
     {}
&
\multicolumn{2}{l}{%
\Rule{Res}
     {P \derives{\alpha} P'}
     {P\res{L} \derives{\alpha} P'\res{L}}
     {\alpha \notin L\cup\overline{L}}
}
\\ \
\\
\hline
\end{tabular}
\end{center}
\label{table:act-opsem}
\end{table}

As a basis for the operational semantics, we first define the set $\urgent{P}$ 
of \emph{urgent} actions of $P$, i.e.\ those initial actions that cannot be
delayed because of a $\sigma$-prefix, see Rule (Pre). On the
basis of this set, we define the action transitions with the 
SOS-rules in Table~\ref{table:act-opsem}; there are symmetric rules
(Sum2) and (Com2) for (Sum1) and (Com1). These are not influenced by
our extension, and are standard except for Rule (Pre), explained
in the introduction.

The time steps, i.e.\ the $\sigma$-transitions, are defined in
Table~\ref{table:clock-opsem}. We write $\derivesL{\sigma}1$ for the
type-1 time steps according to~\cite{LV04} -- they do not use Rule (tNew) --
and $\derivesL{\sigma}2$ for our extended setting, using all rules.
Observe that $a.P$ can perform a time step, since it might have to
wait for a synchronisation partner that can delay $\overline{a}$ or is
unable to perform it. Furthermore, special care is taken in Rule (tCom)
such that all in all, a time step is possible iff there is no urgent
$\tau$ (\emph{Maximal Progress Assumption}). E.g. 
$a.\nil \;|\;\overline{a}.\nil$ cannot perform a time step, while
$a.\nil \;|\; \sigma.\overline{a}.\nil$ can.

We will write $\derivesL{\sigma}i^{+}$ (and $\derivesL{\sigma}i^{\ast}$)
for the transitive (and the reflexive-transitive resp.)
closure of $\derivesL{\sigma}i$ for $i \in \{1,2\}$, and similarly for
other relations.

\begin{table}[h]
\caption{Operational semantics for \TACS\ (clock transitions)}
\begin{center}
\begin{tabular}{@{}l@{\qquad\;\,}l@{\qquad\;\,}l@{}}
\hline
\\
\Rule{tNil}
     {-\!\!\!-}
     {\nil \derivesL{\sigma}i \nil}
     {}
&
\Rule{tRec}
     {P \derivesL{\sigma}i P'}
     {\mu x. P \derivesL{\sigma}i P'[\mu x. P /x]}
     {}
&
\Rule{tRes}
     {P \derivesL{\sigma}i P'}
     {P\res{L} \derivesL{\sigma}i P'\res{L}}
     {} 
\\ \
\\
\Rule{tAct}
     {-\!\!\!-}
     {a.P \derivesL{\sigma}i a.P}
     {}
&
\Rule{tSum}
     {P \derivesL{\sigma}i P' \quad Q \derivesL{\sigma}i Q'}
     {P+Q \derivesL{\sigma}i P'+Q'}
     {}
&
\Rule{tRel}
     {P \derivesL{\sigma}i P'}
     {P[f] \derivesL{\sigma}i P'[f]}
     {}
\\ \
\\
\Rule{tPre}
     {-\!\!\!-}
     {\sigma.P \derivesL{\sigma}i P}
     {}
&
\multicolumn{2}{l}{%
\Rule{tCom}
     {P \derivesL{\sigma}i P' \quad Q \derivesL{\sigma}i Q'}
     {P|Q \derivesL{\sigma}i P'|Q'}
     {\tau \notin \urgent{P|Q}}
}
\\ \
\\
\Rule{tNew}
     {P \derivesL{\sigma}2 P'}
     {\sigma.P \derivesL{\sigma}2 P'}
     {}
\\ \
\\
\hline
\end{tabular}
\end{center}
\label{table:clock-opsem}
\end{table}

Clearly, type-1 time steps like 
$\sigma.\sigma.\sigma.a.\nil \derivesL{\sigma}1 \sigma.\sigma.a.\nil$
are also of type 2, but there are additional
type-2 time steps as e.g.\ 
$\sigma.\sigma.\sigma.a.\nil \derivesL{\sigma}2 \sigma.a.\nil$, which
corresponds to a sequence of two type-1 time steps. But things are not
always that easy: e.g.\ 
$\sigma.\sigma.\sigma.a.\nil\;|\;\sigma.\sigma.a.\nil\derivesL{\sigma}2
a.\nil\;|\;\sigma.a.\nil$ reaches a process that cannot be reached with 
(several) type-1 steps. So the new behaviour we model here is a significant 
extension, and its treatment needs non-trivial proofs.

%
%


Before proceeding, it is convenient to establish a lemma that highlights the interplay between our transition relation and substitution and
that will be employed in some of the following proofs. It is essentially
taken from~\cite{LV04} and also holds in the extended setting.

\begin{lemma}
  Let $P, P', Q \in \terms$ and $\gamma \in \act \cup \{\sigma\}$.
  \begin{enumerate}
  \item\label{part:1-rec} $P \derivesL{\gamma}2 P'$ implies $P[\mu y. Q /
    y] \derivesL{\gamma}2 P'[\mu y. Q / y]$.
  \item\label{part:2-rec} $y$ guarded in~$P$ and
    $P[\mu y. Q / y] \derivesL{\gamma}2 P'$ implies \\ \hspace*{4.0em}
    $\exists P'' \in \terms.\, P \derivesL{\gamma}2
    P''$ and $P' \equiv P''[\mu y. Q / y]$.
  \end{enumerate}
\label{lem:rec}
\end{lemma}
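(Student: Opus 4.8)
The plan is to prove both parts by structural induction on $P$, treating the action case ($\gamma = \alpha \in \act$) and the time-step case ($\gamma = \sigma$) together, since the SOS rules have parallel shapes in the two tables. For part~(\ref{part:1-rec}), I would proceed by induction on the derivation of $P \derivesL{\gamma}2 P'$. The base cases are the axioms: (Act) with $\alpha.P \derives{\alpha} P$, (tNil), (tAct), and (tPre). For (tPre), $\sigma.P \derivesL{\sigma}2 P$ and $(\sigma.P)[\mu y.Q/y] \equiv \sigma.(P[\mu y.Q/y]) \derivesL{\sigma}2 P[\mu y.Q/y]$, as required; the others are equally immediate. For the inductive cases (Pre), (Sum1/2), (Com1/2), (Com3), (Rel), (Res), (tRec), (tRes), (tAct is a base case), (tSum), (tRel), (tCom), (tNew), one pushes the substitution through the operator using the fact that $(\,\cdot\,)[\mu y.Q/y]$ commutes with all term constructors other than $\mu y$, and applies the induction hypothesis to the premise(s). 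The only rules that need a side remark are (tCom) and (tRes), where the side condition (respectively $\tau \notin \urgent{P|Q}$ and $\alpha \notin L \cup \overline L$) must be checked for the substituted term: but $\urgent{\cdot}$ is defined purely syntactically and is easily seen by a separate trivial induction to satisfy $\urgent{R[\mu y.Q/y]} = \urgent{R}$ whenever $y$ is guarded in $R$ (guardedness of the recursion variable is what keeps the head operator of $R$ and hence $\urgent{R}$ unchanged); for (tRes) the label $\alpha$ is untouched by the substitution. The case (Rec) itself is handled by the usual substitution-lemma juggling: from $P \derives{\alpha} P'$ infer $\mu x.P \derives{\alpha} P'[\mu x.P/x]$, and after applying $[\mu y.Q/y]$ one reshuffles the two (necessarily independent, by bound-variable conventions) substitutions.

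For part~(\ref{part:2-rec}) I would again induct on the structure of $P$ (not on the derivation, since the hypothesis is about $P[\mu y.Q/y]$). The crucial observation, which I expect to be the main obstacle, is the case $P \equiv y$ versus $P$ having a genuine head operator. If $P \equiv y$: since $y$ is guarded in $P$ — wait, $y$ is \emph{not} guarded in $y$, so this case is vacuous by the guardedness hypothesis. If $P$ is a variable $z \neq y$, or $\nil$, then $P[\mu y.Q/y] \equiv P$ and the matching step is of the same shape, so we take $P'' \equiv P'$ (for $\nil$, only (tNil) applies). In all remaining cases $P$ has a top-level operator $\op$ that is not $\mu y$, so $P[\mu y.Q/y]$ has the same top-level operator applied to the substituted subterms, and the transition $P[\mu y.Q/y] \derivesL{\gamma}2 P'$ must be derived by the SOS rule(s) matching $\op$. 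One then reads off the premise(s), which concern the substituted subterms; here one must check that $y$ remains guarded in each subterm (true for $+$, $|$, $\res L$, $[f]$, and for $\mu x.R$ provided $x \neq y$ — and if $x = y$ then $y$ does not occur free in $\mu x.R$ so the substitution is trivial) — and note the genuinely easy point that for a prefix $\alpha.R$ or $\sigma.R$, $y$ need not be guarded in $R$, but that is harmless because a time step of $\alpha.R$ via (tAct) returns $\alpha.R$ itself, and a time step of $\sigma.R$ via (tPre)/(tNew) or an action of $\alpha.R$/$\sigma.R$ via (Act)/(Pre) either consumes the prefix (so the result is literally $R[\mu y.Q/y]$ and we take $P'' \equiv R$) or, for (tAct), returns the whole term. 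Apply the induction hypothesis to the subterm premises to obtain un-substituted witnesses $P_i''$, then reassemble $P'' \df \op(P_1'', \dots)$ and check $P'' [\mu y.Q/y] \equiv P'$ and $P \derivesL{\gamma}2 P''$ by reapplying the same SOS rule — again verifying the side conditions of (tCom) and (tRes), which transfer back and forth using $\urgent{R[\mu y.Q/y]} = \urgent{R}$.

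The one delicate subcase of part~(\ref{part:2-rec}) is $P \equiv \mu x.R$ with $x \neq y$: then $P[\mu y.Q/y] \equiv \mu x.(R[\mu y.Q/y])$, and the transition is by (Rec) or (tRec) from a premise $R[\mu y.Q/y] \derivesL{\gamma}2 R_1$ with $P' \equiv R_1[\mu x.(R[\mu y.Q/y])/x]$. Since $x$ is guarded in $R$ we would like to apply the induction hypothesis to $R$ — but $y$ need only be guarded in $\mu x.R$, hence in $R$ modulo occurrences that are anyway captured... in fact $y$ \emph{is} guarded in $R$ whenever it is guarded in $\mu x.R$, since $\mu x$ does not bind $y$. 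So the induction hypothesis gives $R \derivesL{\gamma}2 R''$ with $R_1 \equiv R''[\mu y.Q/y]$; then one sets $P'' \df R''[\mu x.R/x]$, verifies $P \derivesL{\gamma}2 P''$ by (Rec)/(tRec), and checks $P''[\mu y.Q/y] \equiv P'$ by commuting the two independent substitutions (using that $x$ is not free in $\mu y.Q$ and $y$ is not free in $\mu x.R$, by $\alpha$-conversion) — exactly the computation already used for part~(\ref{part:1-rec}). With this, both parts follow, and the argument is uniform over $\gamma \in \act \cup \{\sigma\}$ because the new rule (tNew) is structurally just another prefix-elimination rule, handled like (tPre).
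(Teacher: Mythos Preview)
The paper does not give a proof of this lemma; it states it as ``essentially taken from~\cite{LV04}'' and simply asserts that it ``also holds in the extended setting''. So there is nothing to compare against, and your outline has the right overall shape. There are, however, two concrete slips.

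\textbf{Part~(1), case (tCom).} You justify the side condition by claiming $\urgent{R[\mu y.Q/y]} = \urgent{R}$ ``whenever $y$ is guarded in $R$''. That auxiliary fact is correct, but Part~(1) carries \emph{no} hypothesis that $y$ is guarded in $P$; well-formedness of $\mu y.Q$ only guarantees $y$ guarded in $Q$, which is irrelevant. The repair is to observe (by a separate small induction on the derivation of $R \derivesL{\sigma}2 R'$) that every free variable of a term admitting a $\sigma$-transition lies under an action or clock prefix; since $\urgent{\cdot}$ returns $\emptyset$ at $\sigma.$ and $\{\alpha\}$ at $\alpha.$, substitution for such a variable leaves $\urgent{\cdot}$ unchanged, and the side condition $\tau \notin \urgent{P_1|P_2}$ transfers to the substituted term.

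\textbf{Part~(2), case $P \equiv \sigma.R$.} You write that the rules (Pre) and (tNew) ``consume the prefix (so the result is literally $R[\mu y.Q/y]$ and we take $P'' \equiv R$)''. That is false: both rules have a premise about a transition of the body, and the target of the conclusion is the target of that premise, not the body itself. Concretely, $\sigma.(R[\mu y.Q/y]) \derives{\alpha} S$ via (Pre) comes from $R[\mu y.Q/y] \derives{\alpha} S$, and similarly for (tNew). So you \emph{do} need the induction hypothesis on $R$ here. Fortunately it is available: the clock prefix $\sigma$ is not an action prefix in this calculus, so $y$ guarded in $\sigma.R$ is equivalent to $y$ guarded in $R$, and the IH applies directly. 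Your worry about losing guardedness under $\sigma$ is therefore unfounded; your remark \emph{is} correct for $\alpha.R$ with $\alpha \in \act$, but there only (Act) and (tAct) apply and neither requires recursion into $R$.
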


Now we are able to prove our first result that $\derivesL{\sigma}2$ is 
transitive. 
In its proof and also in the future, we will use 
the following preservation of guardedness under a time step (which is
not hard to see): 
$P \derivesL{\sigma}i P'$ for some $i \in \{1,2\}$ and $x$ guarded 
in $P$ implies that $x$ is also guarded in $P'$.

\begin{prop} Let $P, P', P'' \in \terms$.\\
$P \derivesL{\sigma}2 P' \derivesL{\sigma}2 P''$ implies $P \derivesL{\sigma}2 P''$.
\label{prop:transitive-type-2}
\end{prop}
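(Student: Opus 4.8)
The plan is a structural induction on the term $P$, with a case analysis on the outermost operator of $P$ --- equivalently, on the SOS-rule justifying the first step $P \derivesL{\sigma}2 P'$. The ``rigid'' operators are immediate. If $P$ is $\nil$ or a visible prefix $a.P_1$, then $P' \equiv P$ and $P'' \equiv P'$, so nothing is to be shown; if $P$ is a variable or $\tau.P_1$, no time step exists and the claim is vacuous; and if $P$ is $P_1+P_2$, $P_1\mid P_2$, $P_1\res{L}$ or $P_1[f]$, both given steps must use the matching rule (tSum), (tCom), (tRes) or (tRel), so they decompose into time steps of the immediate subterms, the induction hypothesis is applied componentwise, and the rule is re-applied. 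For the parallel case one only notes that the side condition $\tau\notin\urgent{P_1\mid P_2}$ required for the combined step is already provided by the first of the two steps.

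The first interesting case is $P\equiv\sigma.P_1$. The first step is either (tPre), so $P'\equiv P_1$, or (tNew), so $P_1\derivesL{\sigma}2 P'$. In the (tPre) subcase $P_1\derivesL{\sigma}2 P''$ holds directly, and one application of (tNew) gives $\sigma.P_1\derivesL{\sigma}2 P''$. In the (tNew) subcase $P_1\derivesL{\sigma}2 P'\derivesL{\sigma}2 P''$, so the induction hypothesis for the smaller term $P_1$ yields $P_1\derivesL{\sigma}2 P''$, and (tNew) again finishes. This is the point where transitivity ``pays'' for the loss of time determinism: rule (tNew) is precisely what lets two consecutive strips of a clock prefix collapse into one step.

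The main work, and the step I expect to be the real obstacle, is the recursion case $P\equiv\mu x.P_1$ with $x$ guarded in $P_1$. The only applicable rule is (tRec), giving $P_1\derivesL{\sigma}2 P_1'$ with $P'\equiv P_1'[\mu x.P_1/x]$. The difficulty is that the second step $P'\derivesL{\sigma}2 P''$ is a step of a substituted term, so before using the induction hypothesis I must pull it back through the substitution. By the (easily verified) preservation of guardedness under time steps, $x$ is still guarded in $P_1'$, so Lemma~\ref{lem:rec}(\ref{part:2-rec}) applies and produces $P_1''$ with $P_1'\derivesL{\sigma}2 P_1''$ and $P''\equiv P_1''[\mu x.P_1/x]$. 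Then $P_1\derivesL{\sigma}2 P_1'\derivesL{\sigma}2 P_1''$, the induction hypothesis for $P_1$ gives $P_1\derivesL{\sigma}2 P_1''$, and (tRec) yields $\mu x.P_1\derivesL{\sigma}2 P_1''[\mu x.P_1/x]\equiv P''$, closing the induction.
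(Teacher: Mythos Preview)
Your proposal is correct and matches the paper's proof essentially line for line: the paper also proceeds by structural induction on~$P$, singles out $P\equiv\sigma.P_1$ (with the (tPre)/(tNew) split) and $P\equiv\mu x.P_1$ as the interesting cases, and in the latter uses preservation of guardedness together with Lemma~\ref{lem:rec}(\ref{part:2-rec}) exactly as you do. The remaining cases, which you spell out, are left implicit in the paper.
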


\begin{proof}
This proposition can be proved by induction on the structure of $P$. The two more
interesting cases are:

\begin{enumerate}
\item[(1)] Let $P \equiv \sigma.P_{1}$. 
If $\sigma.P_{1} \derivesL{\sigma}2 P_{1}$ by (tPre) and 
$P_{1} \derivesL{\sigma}2 P''$, we can infer 
$\sigma.P_{1} \derivesL{\sigma}2 P''$ by (tNew).\\
If $\sigma.P_{1} \derivesL{\sigma}2 P'$ due to 
$P_{1} \derivesL{\sigma}2 P'$ by (tNew) and
$P' \derivesL{\sigma}2 P''$, then we have
$P_{1} \derivesL{\sigma}2 P''$ by induction; using (tNew) we conclude
$\sigma.P_{1} \derivesL{\sigma}2 P''$.

\item[(2)] Let $P \equiv \mu x.P_{1}$. Consider the case $\mu x.P_{1} \derivesL{\sigma}2 P_{1}'[\mu x.P_{1} /x]$ due to 
$P_{1} \derivesL{\sigma}2 P_{1}'$ by (tRec) and 
$P_{1}'[\mu x.P_{1} /x] \derivesL{\sigma}2 P''$. By our requirements for 
$\mu x.P_{1}$, $x$ is guarded in $P_{1}$ and, hence, also in $P_{1}'$.
Using Lemma \ref{lem:rec}(\ref{part:2-rec}), we obtain 
$P_{1}' \derivesL{\sigma}2 P_{1}''$ and 
$P'' \equiv P_{1}''[\mu x.P_{1} /x]$ for some $P_{1}'' \in \proc$, since 
$x$ is guarded in $P_{1}'$. By induction, we infer 
$P_{1} \derivesL{\sigma}2 P_{1}''$ from 
$P_{1} \derivesL{\sigma}2 P_{1}' \derivesL{\sigma}2 P_{1}''$, and conclude 
$\mu x.P_{1} \derivesL{\sigma}2 P_{1}''[\mu x.P_{1} /x] \equiv P''$ by (tRec). \qed
                                       
\end{enumerate}
\end{proof}

\section{Relating $\derivesL{\sigma}2$ and $\derivesL{\sigma}1$}
\label{subsec:coherence}

The key to describing how $\derivesL{\sigma}2$ can be matched by
$\derivesL{\sigma}1$ and to the new robustness results lies in the syntactic 
faster-than relation $\uptopo$ from~\cite{LV04}
and its transitive closure.

\begin{defn}\rm
  The relation $\uptopo \;\subseteq \terms \times \terms$ is defined as the
  smallest relation satisfying the following properties, for all
  $P, P', Q, Q' \in \terms$.\\

  $
  \begin{array}{@{}r@{\;\;\;\,}l@{\;\,}l@{\,}l@{\;\;\,}l@{}}
  \text{Always:}
  &
  (1) & P \uptopo P
  &
  (2) & P \uptopo \sigma.P
  \\
  \text{If $P' \uptopo P$, $Q' \uptopo Q$:}
  &
  (3) & P'|Q' \uptopo P|Q
  &
  (4) & P'+Q' \uptopo P+Q
  \\
  &
  (5) & P'\res{L} \uptopo P\res{L}
  &
  (6) & P'[f] \uptopo P[f]
  \\
  \text{If $P' \uptopo P$, $x$~guarded in~$P$:}
  &
  (7) & P'[\mu x.\, P / x] \uptopo \mu x.\, P   
  &
      & $\qed$
  \end{array}
  $
\label{def:uptopo}
\end{defn}

Observe that the syntactic relation is defined for arbitrary open terms. We note
some technical results, partly taken from \cite{LV04}:
  
\begin{lemma}
Let $P, P', Q \in \terms$ such that $P' \uptopo P$, and let $y \in \var$.
  \begin{enumerate}
    \item\label{part:1-upto-guarded}\emph{\cite[Lemma 7(1)]{LV04}} Then $y$ is guarded in~$P$ if
      and only if $y$ is guarded in~$P'$. 
    \item\label{part:1-upto-aux1}\emph{\cite[Lemma 7(2)]{LV04}} $P'[Q/y] \uptopo P[Q/y]$.
    \item\label{part:1-upto-urgent}\emph{\cite[Lemma 8(2)]{LV04}} $\urgent{P'} \supseteq \urgent P$.
  \end{enumerate}
\label{lem:upto-aux1-rec}
\end{lemma}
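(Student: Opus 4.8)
The plan is to establish all three parts by rule induction on the derivation of $P' \uptopo P$, i.e.\ by a case analysis on which of the clauses (1)--(7) of Definition~\ref{def:uptopo} was applied last. Part~\ref{part:1-upto-guarded} should be proved first, since the argument for Part~\ref{part:1-upto-aux1} refers back to it. For Part~\ref{part:1-upto-guarded}, clauses (1) and (2) are immediate once one observes that the clock prefix $\sigma$ is \emph{not} an action prefix, so $y$ is guarded in $\sigma.P$ exactly when it is guarded in $P$; and for clauses (3)--(6) guardedness distributes over the operator ($y$ is guarded in $P\,|\,Q$, $P+Q$, $P\res{L}$, $P[f]$ iff it is guarded in each argument), so the claim follows from the induction hypothesis. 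The only delicate clause is (7): here $P'[\mu x.P/x] \uptopo \mu x.P$ is obtained from $P' \uptopo P$ with $x$ guarded in~$P$. Applying the induction hypothesis to the variable $x$ gives that $x$ is guarded in~$P'$, hence every substituted copy of $\mu x.P$ in $P'[\mu x.P/x]$ — and with it every variable occurrence inside such a copy — sits below an action prefix. For $y \not\equiv x$ it then suffices to compare the occurrences of $y$ already present in $P'$ with those in~$P$: since substitution never places an action prefix above a pre-existing occurrence, $y$ is guarded in $P'[\mu x.P/x]$ iff $y$ is guarded in~$P'$, iff (induction hypothesis) $y$ is guarded in~$P$, iff $y$ is guarded in $\mu x.P$; the case $y\equiv x$ is vacuous, as $x$ is not free in $\mu x.P$ nor in $P'[\mu x.P/x]$.

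For Part~\ref{part:1-upto-aux1} I would again run through clauses (1)--(7). Clauses (1)--(6) are routine, since substitution commutes with the operators (none of which bind variables); for instance, from $P' \uptopo P$ and $R' \uptopo R$ the induction hypothesis and clause (3) give $P'[Q/y]\,|\,R'[Q/y] \uptopo P[Q/y]\,|\,R[Q/y] \equiv (P\,|\,R)[Q/y]$. For clause (7), assume w.l.o.g.\ $x \not\equiv y$ and $x$ not free in $Q$ ($\alpha$-renaming the bound variable if necessary); then $(\mu x.P)[Q/y] \equiv \mu x.(P[Q/y])$ and, by the standard substitution-composition identity, $(P'[\mu x.P/x])[Q/y] \equiv (P'[Q/y])[\mu x.(P[Q/y])/x]$. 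The induction hypothesis yields $P'[Q/y] \uptopo P[Q/y]$, and $x$ remains guarded in $P[Q/y]$ (substituting for $y\not\equiv x$ neither introduces new occurrences of $x$ nor removes action prefixes), so clause (7) applies and delivers precisely $(P'[Q/y])[\mu x.(P[Q/y])/x] \uptopo \mu x.(P[Q/y])$, as required.

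For Part~\ref{part:1-upto-urgent} the induction is driven directly by the defining equations of Table~\ref{table:urgent}, all of which are monotone: clause (1) gives equality; clause (2) gives $\urgent{\sigma.P} = \emptyset \subseteq \urgent{P}$; for clauses (3)--(6) one has $\urgent{P} \subseteq \urgent{P'}$ and $\urgent{Q} \subseteq \urgent{Q'}$ by the induction hypothesis, hence $\urgent{P\,|\,Q} \subseteq \urgent{P'\,|\,Q'}$ — the $\tau$-clause is preserved since $\urgent{P}\cap\overline{\urgent{Q}}\neq\emptyset$ implies $\urgent{P'}\cap\overline{\urgent{Q'}}\neq\emptyset$ — and likewise for $+$, $\res{L}$ and $[f]$. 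For clause (7) one uses $\urgent{\mu x.P} = \urgent{P} \subseteq \urgent{P'} \subseteq \urgent{P'[\mu x.P/x]}$, where the last inclusion is the auxiliary fact $\urgent{R[S/x]} \supseteq \urgent{R}$ for all $R,S\in\terms$, itself a one-line structural induction on $R$ (since $\urgent{x}=\emptyset$ and every operator is monotone in the urgent sets of its arguments).

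The bulk of the work — and the only place I expect genuine care to be needed — is clause (7) in Parts~\ref{part:1-upto-guarded} and~\ref{part:1-upto-aux1}: one has to handle $\alpha$-renaming correctly and exploit that $x$ being guarded in $P'$ ``protects'' the substituted recursion bodies by placing them below action prefixes; the remaining cases, and all of Part~\ref{part:1-upto-urgent}, are mechanical monotonicity arguments.
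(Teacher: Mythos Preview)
The paper does not supply its own proof of this lemma; all three parts are simply cited from~\cite{LV04}. Your rule induction on the derivation of $P' \uptopo P$ is the expected approach and is correct in all parts, including the treatment of clause~(7), which is indeed the only place requiring care.
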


The following lemma establishes several properties of $\uptopotc$ 
similar to those of the syntactic relation in Def.
\ref{def:uptopo}.

\begin{lemma}
  Let $P, P', Q, Q' \in \terms$.
  \[
  \begin{array}{@{}l@{\;\;\;\,}l@{\;\,}l@{\;\,}}
  \text{If $P' \uptopotc P$, $Q' \uptopotc Q$ then:}
  &
  (1) & P'|Q' \uptopotc P|Q \\
  &
  (2) & P'+Q' \uptopotc P+Q \\
  \text{If $P' \uptopotc P$ then:}
  &
  (3) & P' \res{L} \uptopotc P \res{L}  \\
  &
  (4) & P'[f] \uptopotc P[f]  \\
  \text{If $P' \uptopotc P$, $x$ guarded in $P$ then:}
  &
  (5) & P'[\mu x.\, P / x] \uptopotc \mu x.\, P   \\
  \end{array}
  \]
\label{lem:uptopotc}
\end{lemma}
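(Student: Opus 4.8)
The plan is to prove each of the five closure properties of $\uptopotc$ by combining the corresponding property of $\uptopo$ from Definition~\ref{def:uptopo} with an induction on the number of $\uptopo$-steps witnessing the hypotheses $P' \uptopotc P$ (and $Q' \uptopotc Q$). The point is that $\uptopotc$ is by definition the transitive closure of $\uptopo$, so $P' \uptopotc P$ means there is a chain $P' = R_0 \uptopo R_1 \uptopo \cdots \uptopo R_n = P$ with $n \geq 1$, and symmetrically for $Q' \uptopotc Q$; each unary operator property should follow by applying the single-step property of $\uptopo$ all along the chain, while the binary ones need a little more care because the two chains may have different lengths.

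For the unary cases (3), (4), (5), I would argue as follows. Given $P' \uptopotc P$ via $P' = R_0 \uptopo \cdots \uptopo R_n = P$, Definition~\ref{def:uptopo}(5) gives $R_{i}\res{L} \uptopo R_{i+1}\res{L}$ for each $i$, hence $P'\res{L} \uptopotc P\res{L}$ by transitivity; similarly Definition~\ref{def:uptopo}(6) handles relabelling. For case (5), $\mu x.\,P$-closure, there is the extra guardedness side condition: Definition~\ref{def:uptopo}(7) requires $x$ guarded in $R_{i+1}$ to conclude $R_i[\mu x.\,R_{i+1}/x] \uptopo \mu x.\,R_{i+1}$, but what we want is a chain with the \emph{fixed} term $\mu x.\,P$. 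So I would instead first observe, using Lemma~\ref{lem:upto-aux1-rec}(\ref{part:1-upto-guarded}) along the chain, that $x$ is guarded in every $R_i$ (since it is guarded in $P = R_n$). Then Definition~\ref{def:uptopo}(7) gives $R_i[\mu x.\,P/x] \uptopo \mu x.\,P$ for the last step, and for the earlier steps I use Lemma~\ref{lem:upto-aux1-rec}(\ref{part:1-upto-aux1}) with $Q := \mu x.\,P$ to get $R_i[\mu x.\,P/x] \uptopo R_{i+1}[\mu x.\,P/x]$; stitching these together yields $P'[\mu x.\,P/x] \uptopotc \mu x.\,P$.

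For the binary cases (1) and (2), I would use a double chain. From $P' = R_0 \uptopo \cdots \uptopo R_m = P$ and $Q' = S_0 \uptopo \cdots \uptopo S_n = Q$, first vary the left component along $R_0, \ldots, R_m$ keeping $Q'$ fixed: Definition~\ref{def:uptopo}(3) (using reflexivity $Q' \uptopo Q'$) gives $R_i|Q' \uptopo R_{i+1}|Q'$ for each $i$, so $P'|Q' \uptopotc P|Q'$. Then vary the right component along $S_0, \ldots, S_n$ keeping $P$ fixed: Definition~\ref{def:uptopo}(3) gives $P|S_j \uptopo P|S_{j+1}$, so $P|Q' \uptopotc P|Q$. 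Concatenating, $P'|Q' \uptopotc P|Q$; the sum case (2) is identical with Definition~\ref{def:uptopo}(4). A small degenerate point to mention: if one of the chains has length... well, both have length $\geq 1$ since $\uptopotc$ is the transitive (not reflexive-transitive) closure, so each half-chain already has one real step and no empty-chain subtlety arises; even if it did, one could fall back on reflexivity of $\uptopo$.

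The main obstacle — and really the only nonroutine part — is the guardedness bookkeeping in case (5): one must be careful to propagate guardedness of $x$ to every intermediate term $R_i$ (via Lemma~\ref{lem:upto-aux1-rec}(\ref{part:1-upto-guarded})) \emph{before} invoking Definition~\ref{def:uptopo}(7) or the substitution lemma, and to use the fixed term $\mu x.\,P$ rather than $\mu x.\,R_{i+1}$ throughout so that the chain actually composes. Everything else is a mechanical transfer of the one-step properties of $\uptopo$ along finite chains, so I would present it compactly, perhaps proving the unary cases in one breath and then the binary cases, rather than spelling out each induction in full.
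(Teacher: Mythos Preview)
Your proposal is correct and essentially follows the paper's proof. The only cosmetic difference is in cases~(1) and~(2): the paper equalises the two $\uptopo$-chain lengths by padding the shorter one with reflexive steps (Def.~\ref{def:uptopo}(1)) and then applies Def.~\ref{def:uptopo}(3)/(4) componentwise, whereas you vary one component at a time; both manoeuvres rest on the same reflexivity of $\uptopo$, and your treatment of case~(5) matches the paper's exactly (note that propagating guardedness to \emph{all} intermediate $R_i$ is harmless but unnecessary, since Def.~\ref{def:uptopo}(7) is only invoked at the last link and Lemma~\ref{lem:upto-aux1-rec}(\ref{part:1-upto-aux1}) needs no guardedness hypothesis).
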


\begin{proof}
In the proof of (1) and (2), one has to deal with the case that 
$P' \uptopotc P$ and $Q' \uptopotc Q$ hold because of $\uptopo$-chains of
different length: one simply extends the shorter chain using
 Def.~\ref{def:uptopo}.(1). Parts (3) and (4) are easier.

For (5), take $n \ge 1$ and $P_{0},...,P_{n} \in \terms$ with
$P' \equiv P_{0} \uptopo P_{1} \uptopo \cdots \uptopo P_{n} \equiv P$. Since $x$ is 
guarded in $P$, we infer $P_{n-1}[\mu x.\, P_{n} / x] \uptopo \mu x.\, P_{n}$ 
from $P_{n-1} \uptopo P_{n}$ by using Def. \ref{def:uptopo}(7).
Further, we obtain $P_{i-1}[\mu x.\, P_{n} / x] \uptopo P_{i}[\mu x.\, P_{n} / x]$, due to $P_{i-1} \uptopo P_{i}$  
for $1 \le i \le n-1$, by Lemma \ref{lem:upto-aux1-rec}(\ref{part:1-upto-aux1})
and are done.
\end{proof}

With a time step, a process should turn into a process that should be faster 
with everything it does; we can now prove this in terms of $\uptopo$ -- the main
achievement here lies in the treatment of recursion. The other important point about 
$\uptopo$ is that it describes a faster-than relationship in the sense of the semantic definitions in the next section; this holds by Proposition~\ref{prop:up-to-naives} 
for one variant, and Theorems~\ref{thm:1-naive-2-naive-coincidence} 
and~\ref{thm:1-delayed-2-delayed-coincidence} transfer this to other variants.

\begin{lemma}
  Let $P, P' \in \terms$.
  \begin{enumerate}
   \item\label{part:1-up-to-sigma}\emph{\cite[Prop. 9(1)]{LV04}} $P \derivesL{\sigma} 1 P'$ implies $P'
   \uptopo P$, for all terms $P, P' \in \terms$.
   \item\label{part:2-up-to-sigma} $P \derivesL{\sigma} 2 P'$ implies $P'
   \uptopotc P$, for all terms $P, P' \in \terms$.
   \end{enumerate}
  \label{lem:up-to-sigma}
\end{lemma}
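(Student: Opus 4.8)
Part~\ref{part:1-up-to-sigma} is quoted from~\cite{LV04}, so the real work is Part~\ref{part:2-up-to-sigma}. The plan is to proceed by induction on the structure of $P$ (equivalently, on the derivation of $P \derivesL{\sigma}2 P'$), mirroring the rules of Table~\ref{table:clock-opsem}. For each base case and each compositional case I would produce a $\uptopo$-chain from $P'$ back up to $P$; the closure properties collected in Lemma~\ref{lem:uptopotc} are exactly what lets these chains be combined across subterms.

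The easy cases are direct. For (tNil) we have $P' \equiv P \equiv \nil$, so $P' \uptopo P$ by Def.~\ref{def:uptopo}(1). For (tAct), $a.P_1 \derivesL{\sigma}2 a.P_1$, again reflexivity. For (tSum), (tCom), (tRes), (tRel): the premise gives a type-2 step on each immediate subterm, the induction hypothesis turns each into a $\uptopotc$-relationship, and Lemma~\ref{lem:uptopotc}(2),(1),(3),(4) respectively lifts this to the compound term; note for (tCom) the side condition plays no role in the $\uptopo$ argument. For (tPre), $\sigma.P_1 \derivesL{\sigma}2 P_1$ and $P_1 \uptopo \sigma.P_1$ holds by Def.~\ref{def:uptopo}(2), so $P_1 \uptopotc \sigma.P_1$.

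The two genuinely interesting cases are the new rule (tNew) and recursion (tRec). For (tNew), $\sigma.P_1 \derivesL{\sigma}2 P'$ because $P_1 \derivesL{\sigma}2 P'$; by induction $P' \uptopotc P_1$, and since $P_1 \uptopo \sigma.P_1$ by Def.~\ref{def:uptopo}(2), appending one more step to the chain gives $P' \uptopotc \sigma.P_1$. For (tRec), $\mu x.P_1 \derivesL{\sigma}2 P_1'[\mu x.P_1/x]$ arises from $P_1 \derivesL{\sigma}2 P_1'$; by induction $P_1' \uptopotc P_1$, and since $x$ is guarded in $P_1$ (our standing requirement on $\mu x.P_1$), Lemma~\ref{lem:uptopotc}(5) yields $P_1'[\mu x.P_1/x] \uptopotc \mu x.P_1$, as required.

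The main obstacle is the interaction of recursion with the transitive closure: a single type-2 step on the body $P_1$ already produces a $\uptopo$-\emph{chain} (not a single $\uptopo$-step) below the recursion operator, and Def.~\ref{def:uptopo}(7) only closes $\uptopo$ itself under $\mu x.\,(-)$, not arbitrary chains. This is precisely why Lemma~\ref{lem:uptopotc}(5) was proved in advance --- it pushes the whole chain through the recursion, using guardedness together with Lemma~\ref{lem:upto-aux1-rec}(\ref{part:1-upto-aux1}) to relocate the substituted recursive term into every link of the chain. With that lemma available the (tRec) case is routine; the other subtlety, needed to even state the step in (tRec), is the preservation of guardedness under a time step (recorded just before Proposition~\ref{prop:transitive-type-2}), which guarantees $x$ is still guarded in $P_1'$ so that the substitution $P_1'[\mu x.P_1/x]$ behaves well.
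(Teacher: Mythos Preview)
Your proposal is correct and follows essentially the same approach as the paper: induction on the inference of $P \derivesL{\sigma}2 P'$, with the routine cases handled via Lemma~\ref{lem:uptopotc} and the two highlighted cases (tNew) and (tRec) treated exactly as in the paper. One small remark: your closing comment that guardedness of~$x$ in~$P_1'$ is needed for the (tRec) case is unnecessary here --- Lemma~\ref{lem:uptopotc}(5) only requires $x$ guarded in~$P_1$, which is already guaranteed by the syntactic restriction on $\mu x.P_1$; the substitution $P_1'[\mu x.P_1/x]$ is well-formed regardless.
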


\begin{proof}
We prove Part~(\ref{part:2-up-to-sigma}) by induction on the inference of 
$P \derivesL{\sigma}2 P'$.
  \begin{enumerate}
   \item[tNil] $P \equiv \nil \equiv P'$. Since $\uptopo$ $\subseteq$ $\uptopotc$, $\nil \uptopotc \nil$ holds by using Def.~\ref{def:uptopo}(1).
   \item[tAct] $P \equiv a.P'' \equiv P'$. Since $\uptopo$ $\subseteq$ $\uptopotc$, $a.P'' \uptopotc a.P''$ holds by using Def.~\ref{def:uptopo}(1).
   \item[tPre] $P \equiv \sigma.P'$. Since $\uptopo$ $\subseteq$ $\uptopotc$, $P' \uptopotc \sigma.P'$ holds by using Def.~\ref{def:uptopo}(2).
   \item[tNew] $P \equiv \sigma.P_{1}$. 
   Let $\sigma.P_{1} \derivesL{\sigma}2 P'$ due to $P_{1} \derivesL{\sigma}2 P'$.
The latter implies $P' \uptopotc P_{1}$ by induction, and with  
$P_{1} \uptopo \sigma.P_{1}$ by Def.~\ref{def:uptopo}(2),
we conclude $P' \uptopotc  \sigma.P_{1}$.
    \item[tRec] $P \equiv \mu x.\, P_{1}$ and $P' \equiv P_{2}[\mu x.\, P_{1} / x]$.\\ 
Let $\mu x.\, P_{1} \derivesL{\sigma}2 P_{2}[\mu x.\, P_{1} / x]$ due to 
$P_{1} \derivesL{\sigma}2 P_{2}$. By induction, the latter implies 
$P_{2} \uptopotc P_{1}$. Since $x$ is guarded in $P_{1}$, we can infer 
$P_{2}[\mu x.\, P_{1} / x] \uptopotc \mu x.\, P_{1}$ from $P_{2} \uptopotc P_{1}$ by Lemma~\ref{lem:uptopotc}(5) . 
   \item[tSum] $P \equiv P_{1}+Q_{1}$ and $P' \equiv P_{2}+Q_{2}$.\\ 
                                   Since $P \derivesL{\sigma}2 P'$, we have $P_{1} \derivesL{\sigma}2 P_{2}$ and ${Q}_{1} \derivesL{\sigma}2 Q_{2}$.
                                   $P_{2} \uptopotc P_{1}$ and $Q_{2} \uptopotc Q_{1}$ follows by induction hypothesis and 
                           $P_{2}+Q_{2} \uptopotc P_{1}+Q_{1}$ results by application of Lemma~\ref{lem:uptopotc}(2).
   \item[tCom] The treatment of this case is analogous to case tSum and uses Lemma~\ref{lem:uptopotc}(1).
   \item[tRes] $P \equiv P_{1} \res{L}$ and $P' \equiv P_{2} \res{L}$.\\ We obtain
                                   $P_{2} \uptopotc P_{1}$ by induction and 
                           $P_{2} \res{L} \uptopotc P_{1} \res{L}$ by Lemma~\ref{lem:uptopotc}(3).
   \item[tRel] This case follows in analogy to case tRes, using Lemma~\ref{lem:uptopotc}(4). \qed
  
   \end{enumerate}

\end{proof}

Now we will relate the result of a type-2 time step with the only result of
a type-1 time step. Consider e.g.\
$P \df \sigma.\sigma.\sigma.a.\nil \,|\,\sigma. \overline{a}.\nil \,|\, 
\sigma.a.\nil$
$\derivesL{\sigma} 2$ $a.\nil \,|\,\overline{a}.\nil \,|\, a.\nil\,\dfi  P'$, while 
the only enabled type-1 time step leads to
$ \sigma.\sigma.a.\nil \,|\,\overline{a}.\nil \,|\, a.\nil\,\dfi  P'' $; observe 
$P'' \not\!\!\derives{\sigma}$ as $\tau \in \urgent {P''}$.
We see that $P'$ results from $P''$ by removing some leading $\sigma$-prefixes,
so they should be related by $\uptopotc$.

\begin{prop} 
  Let $P, P' \in \terms$.
  $P \derivesL{\sigma} 2 P'$ implies $\exists P'' \in \terms$. $P \derivesL{\sigma} 1 P''$ and $P' \uptopotc P''$.
\label{lem:coherence}
\end{prop}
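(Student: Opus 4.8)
The plan is to prove the statement by induction on the inference of $P \derivesL{\sigma}2 P'$, following the structure of the SOS-rules in Table~\ref{table:clock-opsem}. For each rule we must exhibit a type-1 time step $P \derivesL{\sigma}1 P''$ and verify $P' \uptopotc P''$. The base cases (tNil), (tAct), (tPre) are immediate: for (tNil) and (tAct), $P \derivesL{\sigma}1 P$ and $P' \equiv P \uptopotc P$; for (tPre), $\sigma.P' \derivesL{\sigma}1 P'$ by the same rule, so $P'' \equiv P'$ works. The congruence cases (tSum), (tRes), (tRel) are routine: apply the induction hypothesis to the premise(s), obtaining type-1 steps into some $P''$, reassemble the corresponding type-1 step of the compound term, and close up $\uptopotc$ using the appropriate part of Lemma~\ref{lem:uptopotc}.

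The two genuinely new cases are (tNew) and (tCom), together with (tRec). For (tNew), $P \equiv \sigma.P_1$ and the step is $\sigma.P_1 \derivesL{\sigma}2 P'$ derived from $P_1 \derivesL{\sigma}2 P'$. Here one cannot simply push the induction hypothesis through, because the type-1 step guaranteed for $P_1$ lands in some $P''$ with $P' \uptopotc P''$, but we need a type-1 step \emph{from} $\sigma.P_1$. The key observation is that $\sigma.P_1 \derivesL{\sigma}1 P_1$ by (tPre), and $P_1 \derivesL{\sigma}1 P''$ by induction; since we are allowed to take $P'' \dfi P_1$ itself as the type-1 result of $\sigma.P_1$ (that is a single type-1 step), we only need $P' \uptopotc P_1$. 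But $P_1 \derivesL{\sigma}2 P'$ gives $P' \uptopotc P_1$ directly by Lemma~\ref{lem:up-to-sigma}(\ref{part:2-up-to-sigma}) --- so (tNew) reduces to a prior result rather than to the induction hypothesis of this proposition.

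For (tRec), $P \equiv \mu x.P_1$ and the step is $\mu x.P_1 \derivesL{\sigma}2 P_2[\mu x.P_1/x]$ from $P_1 \derivesL{\sigma}2 P_2$. By induction there is $P_1'' $ with $P_1 \derivesL{\sigma}1 P_1''$ and $P_2 \uptopotc P_1''$. Then $\mu x.P_1 \derivesL{\sigma}1 P_1''[\mu x.P_1/x]$ by (tRec), and we need $P_2[\mu x.P_1/x] \uptopotc P_1''[\mu x.P_1/x]$; this follows from $P_2 \uptopotc P_1''$ by Lemma~\ref{lem:upto-aux1-rec}(\ref{part:1-upto-aux1}) applied along the $\uptopo$-chain (equivalently, the substitution-closure packaged as part of Lemma~\ref{lem:uptopotc}), using that $x$ is guarded in $P_1$ hence in $P_1''$ by preservation of guardedness under time steps. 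For (tCom), $P \equiv P_1 \,|\, Q_1$ with $\tau \notin \urgent{P_1|Q_1}$, and $P_1 \derivesL{\sigma}2 P_1'$, $Q_1 \derivesL{\sigma}2 Q_1'$. Induction yields type-1 steps $P_1 \derivesL{\sigma}1 P_1''$, $Q_1 \derivesL{\sigma}1 Q_1''$ with $P_1' \uptopotc P_1''$ and $Q_1' \uptopotc Q_1''$; to fire (tCom) at type 1 we still need $\tau \notin \urgent{P_1|Q_1}$, which is exactly the side condition we already have, so $P_1|Q_1 \derivesL{\sigma}1 P_1''|Q_1''$, and $P_1'|Q_1' \uptopotc P_1''|Q_1''$ by Lemma~\ref{lem:uptopotc}(1).

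The main obstacle is the (tCom) case, and more precisely making sure the Maximal Progress side condition transfers correctly: we must check that the type-1 step we construct is actually enabled, i.e.\ that no urgent $\tau$ has been introduced. Since the side condition $\tau \notin \urgent{P_1|Q_1}$ refers to the \emph{source} term $P_1|Q_1$, which is the same for both type-1 and type-2 steps, this is direct --- the subtlety is only that one should not be tempted to reason about urgency of the targets. With that noted, (tCom) is no harder than (tSum), and the real content of the proposition is concentrated in (tNew) (handled via Lemma~\ref{lem:up-to-sigma}) and in the recursion bookkeeping of (tRec).
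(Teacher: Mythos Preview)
Your proof is correct and follows essentially the same approach as the paper: induction on the inference of $P \derivesL{\sigma}2 P'$, with (tNew) handled by taking the type-1 step $\sigma.P_1 \derivesL{\sigma}1 P_1$ and invoking Lemma~\ref{lem:up-to-sigma}(\ref{part:2-up-to-sigma}) for $P' \uptopotc P_1$, and (tRec) handled via the induction hypothesis plus Lemma~\ref{lem:upto-aux1-rec}(\ref{part:1-upto-aux1}) applied along the $\uptopo$-chain. One minor remark: your guardedness observation in the (tRec) case is not needed, since Lemma~\ref{lem:upto-aux1-rec}(\ref{part:1-upto-aux1}) has no guardedness hypothesis.
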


\begin{proof}
The proof is an induction on the inference of $P \derivesL{\sigma} 2 P'$. Most 
cases are straightforward, often using Lemma~\ref{lem:uptopotc}; we only show two:
\begin{enumerate}

\item[tNew] $P \equiv \sigma.P''$. Observe that 
$\sigma.P'' \derivesL{\sigma}1 P''$, and let $\sigma.P'' \derivesL{\sigma}2 P'$ 
due to $P'' \derivesL{\sigma}2 P'$ by (tNew). The latter implies
$P' \uptopotc P''$  by Lemma \ref{lem:up-to-sigma}(\ref{part:2-up-to-sigma}).

\item[tRec] $P \equiv \mu x.\, P_{1}$. Let 
$\mu x.\, P_{1} \derivesL{\sigma}2
P_{1}'[\mu x.\, P_{1} / x]$ due to $P_{1} \derivesL{\sigma}2 P_{1}'$ by (tRec$_{2}$). 
\\By the induction hypothesis, there exists $P_{2}'$ such that 
$P_{1} \derivesL{\sigma} 1 P_{2}'$ and $P_{1}' \uptopotc P_{2}'$.\\ Thus, we 
may infer $\mu x.\, P_{1} \derivesL{\sigma}1 P_{2}'[\mu x.\, P_{1} / x]$ by 
(tRec). Applying Lemma \ref{lem:upto-aux1-rec}(\ref{part:1-upto-aux1})
to the $\uptopo$-chain behind $P_{1}' \uptopotc P_{2}'$, we obtain
$ P_{1}'[\mu x.\, P_{1} / x] \uptopotc P_{2}'[\mu x.\, P_{1} / x]$. \qed  
\end{enumerate}
\end{proof}

Generalising a result from \cite{Ilt09}, but with the same proof, one can show
for any process $P$, where any occurrence of parallel composition or choice is
\lq guarded\rq\, that $P \derivesL{\sigma}2 P'$ implies 
$P \derivesL{\sigma}1^{+} P'$.


\section{The naive faster-than preorders}
\label{sec:faster-than}

In \cite{LV04}, the naive faster-than preorder is introduced as an elegant and 
concise candidate for a faster-than preorder: the faster and the slower process are simply linked by a 
relation that is a simulation for time steps and a strong bisimulation for actions.
The definition of the 1-naive faster-than preorder is adopted from \cite{LV04}
and extended to a second variant by considering our new type-2 transitions:

\begin{defn}\rm
  For $i \in \{1,2\}$, a relation $\mathcal{R}\subseteq \proc\times\proc$ is an 
\emph{i-naive
    faster-than relation} if the following conditions hold for all
  $\pair{P}{Q}\in \mathcal{R}$ and $\alpha \in \act$.

  \begin{enumerate}
  \item\label{part:1-1-naive-preorder}  
        $P \derives{\alpha} P'$ implies 
        $\exists Q'.\, Q \derives{\alpha} Q'$
        and $\pair{P'}{Q'}\in \mathcal{R}$.
  \item\label{part:2-1-naive-preorder} 
        $Q \derives{\alpha} Q'$ implies 
        $\exists P'.\, P \derives{\alpha} P'$
        and $\pair{P'}{Q'}\in \mathcal{R}$.
  \item\label{part:3-1-naive-preorder} 
        $P \derivesL{\sigma} i P'$ implies 
        $\exists Q'.\, Q \derivesL{\sigma} i Q'$
        and  
        $\pair{P'}{Q'}\in \mathcal{R}$.
  \end{enumerate}
  We write $P \,\nftpoi\, Q$ if $\pair{P}{Q} \in \mathcal{R}$ for some i-naive
  faster-than relation~$\mathcal{R}$, and call $\nftpoi$ 
\it{i-naive faster-than preorder}.\qed
\label{def:1-naive-preorder}
\end{defn}

Part (1) and (2) require that the faster and the slower process are functionally 
equivalent, in other words, $\mathcal{R}$ is a strong bisimulation. The i-naive faster-than relation refines strong bisimulation 
since additionally any time step of the faster process (corresponding
to a delay of its user) must be 
simulated by the slower one. Extra time steps $Q$ might perform are not 
considered (cf.\ Sec.~\ref{subsec:delayed-faster-than-preorders}), 
intuitively because they do not change the functional behaviour 
of $Q$. (Formally, $Q \derivesL{\sigma} 2  Q'$ implies $Q' \uptopotc Q$ by Lemma \ref{lem:up-to-sigma}(\ref{part:2-up-to-sigma}), thus $Q'$ and $Q$ are strongly bisimilar
by Proposition \ref{prop:up-to-naives}(\ref{part:2-up-to-naives}) below.) Note that faster-than holds for equally fast processes, i.e.\ it is
not strict.

As usual, one can show that $\nftpoi$ is an i-naive faster-than relation and a
preorder; in particular, the composition of two such relations is again one.
The i-naive faster-than preorder is defined on processes; it can be extended to
open terms as usual by considering all possible substitutions with processes.


In the sequel, we will prove that the faster-than preorders $\,\nftpo$ and 
$\,\nftpot$ coincide. For this, we take from \cite{LV04} that the 
syntactic $\uptopo$ satisfies the definition of a 1-naive faster-than relation
and show the same for $\uptopotc$. The same results for the 2-naive case can be 
useful, but we are able to avoid their tedious proofs.

\begin{prop}
\begin{enumerate}
\item\label{part:1-up-to-naives} The relation~$\uptopo$ satisfies the
  defining clauses of a 1-naive faster-than relation, also on open terms;
  hence, $\uptopo$ restricted to processes is a 1-naive faster-than
  relation and
  $\uptopo_{| \proc \times \proc} \;\,\df\,\;
  \uptopo \cap\, (\proc \times \proc)
  \;\subseteq \nftpo$.
\item\label{part:2-up-to-naives} The same holds for~$\uptopotc$.
\end{enumerate}
\label{prop:up-to-naives}
\end{prop}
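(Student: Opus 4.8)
The plan is to prove both parts by a single structural analysis, since part (2) about $\uptopotc$ reduces to part (1) about $\uptopo$ together with closure properties already established in Lemma~\ref{lem:uptopotc} and Lemma~\ref{lem:upto-aux1-rec}. So the real work is Part~(\ref{part:1-up-to-naives}): show that whenever $P' \uptopo P$, the three clauses of Definition~\ref{def:1-naive-preorder} (for $i=1$) hold with $\mathcal{R} = \uptopo$. Since $\uptopo$ is generated inductively by clauses (1)--(7) of Definition~\ref{def:uptopo}, I would do induction on the derivation of $P' \uptopo P$, with a case for each generating clause.

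First I would treat the two base cases. For clause (1), $P' \equiv P$, the bisimulation conditions (1) and (2) of Definition~\ref{def:1-naive-preorder} are trivial, and a type-1 time step $P \derivesL{\sigma}1 P''$ is matched by itself, with $P'' \uptopo P''$ by Definition~\ref{def:uptopo}(1). For clause (2), $P \equiv \sigma.R$ and $P' \equiv R$: here I must check that every action transition of $R$ is matched by one of $\sigma.R$ and vice versa --- this is immediate from Rule~(Pre), which says $\sigma.R$ has exactly the action transitions of $R$ (so in fact the target terms are literally equal, and relatedness by $\uptopo$ holds via Definition~\ref{def:uptopo}(1)). For the time-step clause: if $R \derivesL{\sigma}1 R''$ then $\sigma.R \derivesL{\sigma}1 R$ by (tPre), and I need $R'' \uptopo R$; but $R'' \uptopo R$ follows from Lemma~\ref{lem:up-to-sigma}(\ref{part:1-up-to-sigma}) applied to $R \derivesL{\sigma}1 R''$. (Alternatively one might want $R'' \uptopo$ something reachable from $\sigma.R$; the cleanest match is the (tPre)-step to $R$.)

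Next come the inductive cases (3)--(7), corresponding to the operators $|$, $+$, $\res L$, $[f]$, and recursion. In each, I assume the claim for the components and lift it through the SOS rules. For action transitions this is the standard congruence argument for strong bisimulation on CCS-style operators --- decompose the transition of the composite via the relevant action rule, apply the induction hypothesis to the premise, and reassemble using the same rule and the corresponding $\uptopo$-closure clause from Definition~\ref{def:uptopo}; the urgent-set monotonicity $\urgent{P'} \supseteq \urgent P$ from Lemma~\ref{lem:upto-aux1-rec}(\ref{part:1-upto-urgent}) is what makes synchronisation go through in the right direction. For time steps the same pattern applies using the clock rules (tCom), (tSum), (tRes), (tRel), (tRec); the side condition $\tau \notin \urgent{P'|Q'}$ for the faster side needs the urgent-set inclusion to transfer to the slower side so that the slower parallel composition can also take its time step, and (tSum) requires that both summands of the slower process can do a matching time step --- which holds because, after decomposing, each faster summand's time step is matched componentwise. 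The recursion case (7) is the delicate one: from $P'_0 \uptopo P_0$ with $x$ guarded in $P_0$ I must handle transitions of $P'_0[\mu x.P_0/x]$ versus $\mu x.P_0$, using Lemma~\ref{lem:rec} to push substitutions through transitions and Lemma~\ref{lem:upto-aux1-rec}(\ref{part:1-upto-aux1}) to keep the unfolded bodies related; guardedness ensures the first transition cannot ``expose'' the recursion variable, so one unfolding suffices.

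Finally, Part~(\ref{part:2-up-to-naives}): given $P' \uptopotc P$, fix a chain $P' \equiv P_0 \uptopo P_1 \uptopo \cdots \uptopo P_n \equiv P$. Using Part~(\ref{part:1-up-to-naives}) on each link, an action transition of $P'$ propagates step by step along the chain --- each $P_{i}$ matches with some transition of $P_{i+1}$ to a $\uptopo$-related target --- yielding a transition of $P$ whose target is $\uptopotc$-related to the target of $P'$'s transition, and symmetrically for transitions of $P$; for time steps the same chaining works. The closure clauses needed to reassemble composite terms along the chain are exactly Lemma~\ref{lem:uptopotc}. I expect the recursion/substitution bookkeeping in clause (7) --- matching transitions across an unfolding while tracking that the residual body stays $\uptopo$-related --- to be the main obstacle; everything else is the routine CCS congruence-of-bisimulation argument adapted to the $\uptopo$-closure clauses and the urgent-set inclusion.
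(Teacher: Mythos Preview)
Your proposal is correct. For Part~(\ref{part:1-up-to-naives}) you actually give considerably more than the paper does: the paper simply cites \cite{LV04} for this part and does not reprove it. Your structural induction on the derivation of $P' \uptopo P$ --- with Lemma~\ref{lem:up-to-sigma}(\ref{part:1-up-to-sigma}) in the base case~(2), the urgent-set inclusion from Lemma~\ref{lem:upto-aux1-rec}(\ref{part:1-upto-urgent}) to push the side condition of (tCom) from the faster to the slower side, and Lemmas~\ref{lem:rec} and~\ref{lem:upto-aux1-rec}(\ref{part:1-upto-aux1}) for the recursion clause --- is the expected argument and is sound.

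For Part~(\ref{part:2-up-to-naives}) the paper takes a slightly more abstract route than your explicit chaining: it just observes that $\uptopotc = \uptopo^{0} \cup \uptopo^{1} \cup \cdots$ and invokes the closure of 1-naive faster-than relations under composition and union (noted right after Definition~\ref{def:1-naive-preorder}). Your link-by-link propagation along a chain $P_0 \uptopo \cdots \uptopo P_n$ is exactly this closure-under-composition argument unfolded, so the two are equivalent. One small point: Lemma~\ref{lem:uptopotc} is not needed here --- that lemma concerns how $\uptopotc$ interacts with the syntactic operators, whereas all Part~(\ref{part:2-up-to-naives}) requires is that the targets produced along the chain are again linked by a $\uptopo$-chain and hence by $\uptopotc$.
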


\begin{proof} To obtain Part~(\ref{part:2-up-to-naives}) from 
Part~(\ref{part:1-up-to-naives}), observe that $\uptopotc$ equals 
$\uptopo^{0} \cup \uptopo^{1} \cup \dots$ and the closure of 
1-naive faster-than relations under union and composition.\qed
\end{proof}
  \begin{theorem}[Coincidence~I]
  The preorders~$\nftpo$ and~$\nftpot$ coincide.
\label{thm:1-naive-2-naive-coincidence}
\end{theorem}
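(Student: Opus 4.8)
The goal is to establish $\nftpo\;=\;\nftpot$ as relations on $\proc$. I would split this into the two inclusions and handle them separately.

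\emph{Inclusion $\nftpo\,\subseteq\,\nftpot$.}
The plan is to show that every $1$-naive faster-than relation $\mathcal{R}$ is already a $2$-naive faster-than relation, which immediately gives $\nftpo\subseteq\nftpot$. Conditions~(1) and~(2) of Definition~\ref{def:1-naive-preorder} are about action transitions and are literally the same for $i=1$ and $i=2$, so nothing is to do there. The only issue is condition~(3): given $\pair{P}{Q}\in\mathcal{R}$ and $P\derivesL{\sigma}2 P'$, I must produce $Q'$ with $Q\derivesL{\sigma}2 Q'$ and $\pair{P'}{Q'}\in\mathcal{R}$. Here I would invoke Proposition~\ref{lem:coherence}: from $P\derivesL{\sigma}2 P'$ we get some $P''$ with $P\derivesL{\sigma}1 P''$ and $P'\uptopotc P''$. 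Since $\mathcal{R}$ is $1$-naive, $P\derivesL{\sigma}1 P''$ is matched by some $Q\derivesL{\sigma}1 Q''$ with $\pair{P''}{Q''}\in\mathcal{R}$, and a type-$1$ step is in particular a type-$2$ step, so $Q\derivesL{\sigma}2 Q''$. To close the triangle I then need $\pair{P'}{Q''}$ (or some modification) in a $1$-naive relation; the natural candidate is the composition $\uptopotc_{|\proc\times\proc}\,;\,\mathcal{R}$, which by Proposition~\ref{prop:up-to-naives}(\ref{part:2-up-to-naives}) together with closure of $1$-naive relations under composition is again a $1$-naive faster-than relation, and it contains $\pair{P'}{Q''}$. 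So the honest statement is: given any $1$-naive relation $\mathcal{R}$, the relation $\mathcal{R}\cup(\uptopotc_{|\proc\times\proc}\,;\,\mathcal{R})$ is a $2$-naive faster-than relation containing $\mathcal{R}$; applying this to (a $1$-naive relation witnessing) $\nftpo$ yields $\nftpo\subseteq\nftpot$. One has to be slightly careful that in verifying the $2$-naive clauses for this enlarged relation, pairs coming from the composition are again handled by the same trick, so a mild bookkeeping argument, peeling off $\uptopotc$ via Lemma~\ref{lem:uptopotc} and Proposition~\ref{lem:coherence}, is needed; this is the main technical point of this direction.

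\emph{Inclusion $\nftpot\,\subseteq\,\nftpo$.}
This direction looks easier. The plan is to show directly that $\nftpot$ is a $1$-naive faster-than relation. Conditions~(1) and~(2) transfer verbatim. For condition~(3), suppose $P\,\nftpot\,Q$ and $P\derivesL{\sigma}1 P'$. A type-$1$ step is also a type-$2$ step, so $P\derivesL{\sigma}2 P'$, and since $\nftpot$ is a $2$-naive relation there is $Q'$ with $Q\derivesL{\sigma}2 Q'$ and $P'\,\nftpot\,Q'$. But now I am stuck with a type-$2$ step $Q\derivesL{\sigma}2 Q'$ on the slower side, whereas a $1$-naive relation needs a type-$1$ step. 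So I again apply Proposition~\ref{lem:coherence} to $Q\derivesL{\sigma}2 Q'$, obtaining $Q''$ with $Q\derivesL{\sigma}1 Q''$ and $Q'\uptopotc Q''$. It then remains to check $P'\,\nftpot\,Q''$; this follows because $\uptopotc_{|\proc\times\proc}\subseteq\nftpot$ (which holds since, by Proposition~\ref{prop:up-to-naives}(\ref{part:2-up-to-naives}) and the $1$-vs-$2$ relationship just used, every $1$-naive relation is contained in some $2$-naive relation, hence $\uptopotc_{|\proc\times\proc}\subseteq\nftpot$) and $\nftpot$ is transitive, so from $P'\,\nftpot\,Q'\,\nftpot\,Q''$ we get $P'\,\nftpot\,Q''$. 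Hence $\nftpot$ satisfies all three $1$-naive clauses and $\nftpot\subseteq\nftpo$.

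\emph{Expected main obstacle.}
The real work is entirely concentrated in correctly combining Proposition~\ref{lem:coherence} (type-$2$ matched by type-$1$ up to $\uptopotc$) with the closure properties of naive relations under composition and union from Proposition~\ref{prop:up-to-naives}. The delicate point is the $\nftpo\subseteq\nftpot$ direction: one must take the enlargement $\mathcal{R}\cup(\uptopotc_{|\proc\times\proc}\,;\,\mathcal{R})$ and verify, for a pair in the composition part, that a type-$2$ step is still matched with the resulting pair landing back inside the enlarged relation — i.e.\ that the construction is closed under one round of the argument. Everything else (the action clauses, the $\nftpot\subseteq\nftpo$ direction) is routine given the earlier machinery. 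I would therefore present the argument as: first record the easy inclusion $\nftpot\subseteq\nftpo$ as above, then prove the enlargement lemma for the other inclusion, being explicit about why pairs from the composition part are handled by iterating Proposition~\ref{lem:coherence} and Lemma~\ref{lem:uptopotc}.
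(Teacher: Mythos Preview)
Your proposal is correct and follows essentially the same route as the paper, using Proposition~\ref{lem:coherence} to trade a type-$2$ step for a type-$1$ step plus $\uptopotc$, and Proposition~\ref{prop:up-to-naives}(\ref{part:2-up-to-naives}) to absorb $\uptopotc$ back into the naive preorder. The only difference is that you make more work for yourself than necessary in the direction $\nftpo\subseteq\nftpot$.

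You try to show, for an \emph{arbitrary} $1$-naive relation $\mathcal{R}$, that the enlargement $\mathcal{R}\cup(\uptopotc_{|\proc\times\proc}\,;\,\mathcal{R})$ is $2$-naive, and you correctly anticipate some bookkeeping when handling pairs from the composition part. The paper sidesteps this entirely by working with $\mathcal{R}=\nftpo$ itself: since $\nftpo$ is a preorder and $\uptopotc_{|\proc\times\proc}\subseteq\nftpo$ by Proposition~\ref{prop:up-to-naives}(\ref{part:2-up-to-naives}), one has $\uptopotc_{|\proc\times\proc}\,;\,\nftpo\subseteq\nftpo$ by transitivity, so no enlargement is needed and the verification of clause~(3) ends with ``$P'\,\nftpo\,Q''$ by transitivity''. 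What you flag as the ``expected main obstacle'' thus disappears. A related point: the paper proves $\nftpo\subseteq\nftpot$ \emph{first}, so that in the reverse direction the needed fact $\uptopotc_{|\proc\times\proc}\subseteq\nftpot$ is immediate from $\uptopotc_{|\proc\times\proc}\subseteq\nftpo\subseteq\nftpot$; your plan to present $\nftpot\subseteq\nftpo$ first forces the slightly awkward justification you give for $\uptopotc_{|\proc\times\proc}\subseteq\nftpot$.
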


\begin{proof} There are two places, where we insert some text in square
brackets. With this text, this is essentially the proof of 
Theorem~\ref{thm:1-2-strong-faster-coincidence}; here, the text should be
ignored.

To prove $\nftpo \subseteq \nftpot$, we show that $\nftpo$ satisfies the 
definition of a 2-naive faster-than relation; clearly, we have only to 
consider Part~(3).
  Hence, consider some arbitrary processes $P$ and $Q$ such that 
$P \,\nftpo\, Q$.
        If $P \derivesL{\sigma}2 P'$ for some process $P'$, then 
$P \derivesL{\sigma}1 P''$ for some process $P''$ satisfying $P' \uptopotc P''$
  by Proposition~\ref{lem:coherence}.
  By definition of $\nftpo$, [$\urgent{Q} \subseteq \urgent{P}$ and]
there exists some $Q''$ with 
  $Q \derivesL{\sigma}1 Q''$ and $P'' \,\nftpo\, Q''$; hence, also
  $Q \derivesL{\sigma}2 Q''$. 
  $P' \uptopotc P''$ implies $P' \,\nftpo \, P''$ by 
  Proposition~\ref{prop:up-to-naives}(\ref{part:2-up-to-naives}), and
  we are done by transitivity of $\nftpo$.
 
  For the inverse inclusion $\nftpot \subseteq \nftpo$, consider $P$ and $Q$ 
such that $P \,\nftpot\, Q$.
 If $P \derivesL{\sigma}1 P'$ for some process $P'$, then
  $P \derivesL{\sigma}2 P'$ 
by definition of  $\nftpot$, $Q \derivesL{\sigma}2 Q'$ for some $Q'$ such that 
[$\urgent{Q} \subseteq \urgent{P}$ and] $P' \,\nftpot\, Q'$ .
  By Proposition~\ref{lem:coherence}, there exists some $Q''$ with
  $Q \derivesL{\sigma}1 Q''$ and $Q' \uptopotc Q''$. Hence, $Q' \,\nftpot\, Q''$
follows from $\uptopotc \subseteq \nftpo \subseteq \nftpot$, and
  we are done by transitivity of $\nftpot$.
\qed
\end{proof}


\section{The delayed-faster-than-preorders}
\label{subsec:delayed-faster-than-preorders}

In the course of design choices in \cite{LV04}, an alternative candidate for a faster-than relation on processes is the delayed faster-than preorder.
Since the slower process might take more time, this preorder allows the
slower process additional time steps when matching an action or time step.
We define the i-delayed faster-than preorder, where case $i=1$
is adopted from \cite{LV04}. 

Before giving the definition, we present a technical lemma, highlighting an
important property of the transitive closure of the syntactic relation.
The result is intuitively convincing because, if the faster process 
skips time steps by performing a 'real' type-2 time step, it can only become 
even faster than the slower process $Q$.

\begin{lemma}
  Let $P, P' \in \terms$  such that $P \uptopotc P'$.
  Then $P \derivesL{\sigma}2 P_{1}$ implies 
$\exists P_{1}'.P' \derivesL{\sigma}1 P_{1}'$ and $P_{1} \uptopotc P_{1}'$.   
\label{lem:uptopotc-sigma1}
\end{lemma}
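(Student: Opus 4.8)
The plan is to prove Lemma~\ref{lem:uptopotc-sigma1} by first reducing it to the case of a single application of the syntactic relation $\uptopo$, and then doing an induction on the definition of $\uptopo$ (Def.~\ref{def:uptopo}), with the key case being the clock prefix rule (2). Concretely: suppose $P \uptopotc P'$, witnessed by a chain $P \equiv P_0 \uptopo P_1 \uptopo \cdots \uptopo P_n \equiv P'$, and suppose $P \derivesL{\sigma}2 P_1$. I would prove by induction on $n$ that there is $P_1'$ with $P' \derivesL{\sigma}1 P_1'$ and $P_1 \uptopotc P_1'$. For $n=0$ we have $P \equiv P'$; here I use Proposition~\ref{lem:coherence} to get some $P''$ with $P \derivesL{\sigma}1 P''$ and $P_1 \uptopotc P''$, and take $P_1' \df P''$. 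For the inductive step, apply the base case to $P \equiv P_0 \uptopo P_1$ (which is a single $\uptopo$-step, so falls under the $n=1$ claim proved separately) to get $P' \derivesL{\sigma}1$-reachable intermediate, then push along the remaining chain $P_1 \uptopo \cdots \uptopo P_n$; but this is circular, so really the right structure is: prove the statement for a \emph{single} $\uptopo$-step first (call it $(\star)$), then lift to $\uptopotc$ by iterating $(\star)$ along the chain, using transitivity of $\derivesL{\sigma}1^{\ast}$ at each stage — except we only want a single $\derivesL{\sigma}1$ step on the right.

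Here is the cleaner organisation I would actually use. The heart is the single-step claim $(\star)$: if $P \uptopo P'$ and $P \derivesL{\sigma}2 P_1$, then $\exists P_1'.\ P' \derivesL{\sigma}1 P_1'$ and $P_1 \uptopotc P_1'$. I prove $(\star)$ by induction on the derivation of $P \uptopo P'$ using Def.~\ref{def:uptopo}. Case (1), $P' \equiv P$: apply Proposition~\ref{lem:coherence} directly. Case (2), $P' \equiv \sigma.P$: then $\sigma.P \derivesL{\sigma}1 P$ by (tPre), and $P_1 \uptopotc P$ follows from $P \derivesL{\sigma}2 P_1$ by Lemma~\ref{lem:up-to-sigma}(\ref{part:2-up-to-sigma}); take $P_1' \df P$. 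Cases (3)--(6) (parallel, sum, restriction, relabelling): here $P \equiv P_a \mathbin{\mathrm{op}} P_b$ with $P_a \uptopo P_a'$ etc.; analyse which rule produced $P \derivesL{\sigma}2 P_1$ — since $P$ is a composite term headed by $\mathrm{op}$, it must be (tCom)/(tSum)/(tRes)/(tRel), each of which is driven by type-2 steps of the components; apply the induction hypothesis componentwise to get matching type-1 steps of $P_a', P_b'$, reassemble the step for $P'$ (for (tCom) one must check the side condition $\tau \notin \urgent{P_a'|P_b'}$, which follows since $\urgent{P_a'} \supseteq \urgent{P_a}$ by Lemma~\ref{lem:upto-aux1-rec}(\ref{part:1-upto-urgent})$\ldots$ wait, that goes the wrong way). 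This urgency side-condition is exactly the subtle point; see the last paragraph.

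Case (7), $P' \equiv \mu x.\,P_0$ with $P \equiv P_a[\mu x.\,P_0/x]$, $P_a \uptopo P_0$, $x$ guarded in $P_0$: the type-2 step $P_a[\mu x.\,P_0/x] \derivesL{\sigma}2 P_1$ is pulled back via Lemma~\ref{lem:rec}(\ref{part:2-rec}) (using that $x$ is guarded in $P_a$ by Lemma~\ref{lem:upto-aux1-rec}(\ref{part:1-upto-guarded})) to $P_a \derivesL{\sigma}2 P_a''$ with $P_1 \equiv P_a''[\mu x.\,P_0/x]$; apply the IH to $P_a \uptopo P_0$ to get $P_0 \derivesL{\sigma}1 P_0'$ with $P_a'' \uptopotc P_0'$; then $\mu x.\,P_0 \derivesL{\sigma}1 P_0'[\mu x.\,P_0/x]$ by (tRec), and $P_a''[\mu x.\,P_0/x] \uptopotc P_0'[\mu x.\,P_0/x]$ by applying Lemma~\ref{lem:upto-aux1-rec}(\ref{part:1-upto-aux1}) along the $\uptopo$-chain behind $P_a'' \uptopotc P_0'$, as in the proof of Proposition~\ref{lem:coherence}. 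That proves $(\star)$. Finally, lift $(\star)$ to $\uptopotc$: given the chain $P \equiv P_0 \uptopo \cdots \uptopo P_n \equiv P'$ and $P \derivesL{\sigma}2 P_1$, apply $(\star)$ to $P_0 \uptopo P_1$ to obtain $R_1$ with $P_1 \derivesL{\sigma}1 R_1$ and $P_1$-result $\uptopotc R_1$; since $P_1 \derivesL{\sigma}1 R_1$ is also a type-2 step, and $P_1 \uptopotc P_n$ via $P_2 \uptopo \cdots \uptopo P_n$, iterate — but at each stage we must keep a \emph{single} type-1 step on the right. To keep it single, note $\derivesL{\sigma}1$ applied repeatedly does not stay single; instead use that $\derivesL{\sigma}1$ need not be iterated: after the first application of $(\star)$ to $P_0 \uptopo P_1$ we have $P' $ reached from $P_1$ only in the $n=1$ subcase. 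For general $n$, apply $(\star)$ to the \emph{last} link $P_{n-1}\uptopo P_n \equiv P'$ after inductively transporting the type-2 step $P \derivesL{\sigma}2 P_1$ along $P_0 \uptopo \cdots \uptopo P_{n-1}$, where by the induction hypothesis on $n$ we only ever produce type-1 steps, which are also type-2 steps, so the hypothesis re-applies; the very last step uses $(\star)$ once to land a type-1 step at $P' = P_n$.

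The main obstacle I anticipate is the side condition of (tCom) in case (3) (and, dually, making sure no urgent $\tau$ blocks the reconstructed type-1 step): we have $\urgent{P_a'} \supseteq \urgent{P_a}$ and $\urgent{P_b'} \supseteq \urgent{P_b}$ by Lemma~\ref{lem:upto-aux1-rec}(\ref{part:1-upto-urgent}) — this gives urgency going the \emph{wrong} direction to conclude $\tau \notin \urgent{P_a' | P_b'}$ from $\tau \notin \urgent{P|Q} = \urgent{P_a | P_b}$. The resolution is that in Lemma~\ref{lem:uptopotc-sigma1} the \emph{faster} side is $P$ (which does the type-2 step) and the \emph{slower} side is $P'$, i.e.\ $P \uptopo P'$ means $P$ is the smaller term; so in clause (3) we have $P_a \uptopo P_a'$, i.e.\ $\urgent{P_a} \supseteq \urgent{P_a'}$, hence $\tau \notin \urgent{P_a|P_b}$ does give $\tau \notin \urgent{P_a'|P_b'}$. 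I would state this direction carefully once at the start and then the (tCom) case goes through. The other mild annoyance is the bookkeeping in the $\uptopotc$-lifting to keep exactly one type-1 step on the right rather than a chain; this is handled by the induction-on-chain-length argument sketched above, transporting type-2 steps through all but the final link and spending $(\star)$ only on the last link.
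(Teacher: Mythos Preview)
Your approach is correct, but it takes a much longer route than the paper. The paper's proof is three lines: first apply Proposition~\ref{lem:coherence} to $P \derivesL{\sigma}2 P_1$ to obtain $P \derivesL{\sigma}1 P_1''$ with $P_1 \uptopotc P_1''$; then use that $\uptopotc$ itself satisfies the defining clauses of a 1-naive faster-than relation (Proposition~\ref{prop:up-to-naives}(\ref{part:2-up-to-naives}), clause~(3)) to match this type-1 step along $P \uptopotc P'$, yielding $P' \derivesL{\sigma}1 P_1'$ with $P_1'' \uptopotc P_1'$; finally conclude $P_1 \uptopotc P_1'$ by transitivity of $\uptopotc$. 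No induction on $\uptopo$, no chain argument.

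Your argument is essentially a hand-rolled re-proof of the $\sigma$-clause of Proposition~\ref{prop:up-to-naives} (for $\uptopo$ and then for $\uptopotc$), specialised so that the initiating step is type~2 rather than type~1. Every case you work through (the clock-prefix case via Lemma~\ref{lem:up-to-sigma}(\ref{part:2-up-to-sigma}), the parallel case with the urgency direction $\urgent{P_a} \supseteq \urgent{P_a'}$ from Lemma~\ref{lem:upto-aux1-rec}(\ref{part:1-upto-urgent}), the recursion case via Lemma~\ref{lem:rec}(\ref{part:2-rec})) is sound, and your chain-length induction for the lift to $\uptopotc$ is the right bookkeeping to keep exactly one type-1 step on the right. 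But all of this work has already been packaged: once you know $\uptopotc$ is a 1-naive faster-than relation, the lemma is immediate from Proposition~\ref{lem:coherence}. One small presentational glitch: you use the symbol $P_1$ both for the target of the type-2 step (as in the lemma statement) and for the second element of your $\uptopo$-chain; in a written-up version you would want to rename one of them.
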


\begin{proof}
$P \derivesL{\sigma}2 P_{1}$ implies $P \derivesL{\sigma}1 P_{1}''$ and
$P_{1} \uptopotc P_{1}''$  by using Proposition~\ref{lem:coherence}. Then we
get $P' \derivesL{\sigma}1 P_{1}'$ and $P_{1}'' \uptopotc P_{1}'$ by 
Proposition~\ref{prop:up-to-naives}(\ref{part:2-up-to-naives}) and are
done by the obvious transitivity of $\uptopotc$.
\qed
\end{proof}

\begin{defn}\rm
 For $i \in \{1,2\}$, a relation $\mathcal{R}\subseteq \proc\times\proc$
 is an \emph{i-delayed faster-than relation} if the following conditions hold for all
  $\pair{P}{Q}\in \mathcal{R}$ and $\alpha \in \act$.
  \begin{enumerate}
  \item $P \derives{\alpha} P'$ implies 
        $\exists Q'.\, Q \derivesL{\sigma}i^{\ast}\!\derives{\alpha}\;
                         \derivesL{\sigma}i^{\ast} Q'$
        and $\pair{P'}{Q'}\in \mathcal{R}$.
  \item $Q \derives{\alpha} Q'$ implies
        $\exists P'.\, P \derives{\alpha} P'$
        and $\pair{P'}{Q'}\in \mathcal{R}$.
  \item $P \derivesL{\sigma}i P'$ implies 
        $\exists Q'.\, Q \derivesL{\sigma}i^{+} Q'$
        and  
        $\pair{P'}{Q'}\in \mathcal{R}$.
  \end{enumerate}
  We write $P \,\dftpoi Q$ if $\pair{P}{Q} \in \mathcal{R}$ for some
  i-delayed faster-than relation~$\mathcal{R}$ and call $\dftpoi$ \emph{i-delayed faster-than preorder}.
\label{def:1-delayed-preorder}\qed
\end{defn}

As usual, $\dftpoi$ is the largest delayed faster-than relation; somewhat similar
to weak bisimulation, one can also show that it is a preorder.

It is shown in \cite{LV04} that $\nftpo$ and~$\dftpo$ coincide. To complete
the picture, we will show that $\dftpo$ and~$\dftpot$ coincide and get:

\begin{theorem}
  The preorders~$\nftpo$, $\nftpot$, $\dftpo$ and~$\dftpot$ coincide.
\label{thm:1-delayed-2-delayed-coincidence}
\end{theorem}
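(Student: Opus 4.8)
The plan is to derive Theorem~\ref{thm:1-delayed-2-delayed-coincidence} by combining the three coincidences $\nftpo = \nftpot$ (Theorem~\ref{thm:1-naive-2-naive-coincidence}), $\nftpo = \dftpo$ (from~\cite{LV04}), and a new coincidence $\dftpo = \dftpot$, which is the only piece that still needs a proof. Since $\nftpo = \nftpot = \dftpo$ are already in hand, it suffices to show $\dftpo = \dftpot$, and for that I would prove the two inclusions separately, each by exhibiting the larger preorder as a faster-than relation of the required type.

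For $\dftpo \subseteq \dftpot$ I would take $P \,\dftpo\, Q$ and check that $\dftpo$ satisfies the three clauses of an $i=2$ delayed faster-than relation. Clause~(2) is identical for $i=1,2$, so nothing to do there. For clause~(3), given $P \derivesL{\sigma}2 P'$, use Proposition~\ref{lem:coherence} to get $P \derivesL{\sigma}1 P''$ with $P' \uptopotc P''$; by the $\dftpo$-property there is $Q''$ with $Q \derivesL{\sigma}1^{+} Q''$ (hence $Q \derivesL{\sigma}2^{+} Q''$) and $P'' \,\dftpo\, Q''$; then $P' \uptopotc P''$ gives $P' \,\dftpo\, P''$ via Proposition~\ref{prop:up-to-naives}(\ref{part:2-up-to-naives}) together with $\nftpo = \dftpo$, and transitivity of $\dftpo$ closes the case. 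For clause~(1), an action $P \derives{\alpha} P'$ is matched in $\dftpo$ by $Q \derivesL{\sigma}1^{\ast}\derives{\alpha}\derivesL{\sigma}1^{\ast} Q'$ with $P' \,\dftpo\, Q'$; since every $\derivesL{\sigma}1$ step is a $\derivesL{\sigma}2$ step, this is also a $\derivesL{\sigma}2^{\ast}\derives{\alpha}\derivesL{\sigma}2^{\ast}$ move, so clause~(1) for $i=2$ holds directly.

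For the converse $\dftpot \subseteq \dftpo$ I would take $P \,\dftpot\, Q$ and verify the $i=1$ clauses for $\dftpot$. Clause~(2) is again trivial. For clause~(3), $P \derivesL{\sigma}1 P'$ is also a type-2 step, so there is $Q'$ with $Q \derivesL{\sigma}2^{+} Q'$ and $P' \,\dftpot\, Q'$; I then have to convert the type-2 chain $Q \derivesL{\sigma}2^{+} Q'$ into a type-1 chain ending at some $Q''$ with $Q' \uptopotc Q''$, and conclude $P' \,\dftpot\, Q''$ from $\uptopotc \subseteq \nftpo = \dftpo \subseteq \dftpot$ and transitivity. For clause~(1), an action $P \derives{\alpha} P'$ is matched by $Q \derivesL{\sigma}2^{\ast}\derives{\alpha}\derivesL{\sigma}2^{\ast} Q'$; here I must again replace the surrounding type-2 chains by type-1 chains while moving along $\uptopotc$, and absorb the resulting $\uptopotc$-discrepancy into $\dftpot$.

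The main obstacle is the repeated need to match a \emph{chain} $Q \derivesL{\sigma}2^{+} Q'$ (or a type-2-decorated action move) by a type-1 chain up to $\uptopotc$. The single-step statement Proposition~\ref{lem:coherence} and its consequence Lemma~\ref{lem:uptopotc-sigma1} are exactly the right tools, but they must be iterated: I would prove, by induction on the length of the chain, that $P \uptopotc P_0$ and $P \derivesL{\sigma}2^{\ast} P_1$ imply $\exists P_1'.\, P_0 \derivesL{\sigma}1^{\ast} P_1'$ with $P_1 \uptopotc P_1'$ (the inductive step being a single application of Lemma~\ref{lem:uptopotc-sigma1} together with transitivity of $\uptopotc$), and likewise that a move $P \derivesL{\sigma}2^{\ast}\derives{\alpha}\derivesL{\sigma}2^{\ast} P'$ can be turned into $P_0 \derivesL{\sigma}1^{\ast}\derives{\alpha}\derivesL{\sigma}1^{\ast} P_1'$ with $P' \uptopotc P_1'$, using in addition that $\uptopo$ (hence $\uptopotc$) is a 1-naive faster-than relation to push the action step across (Proposition~\ref{prop:up-to-naives}). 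Once these two chain-lemmas are available, both inclusions above go through by the routine transitivity arguments sketched; everything else is a direct transcription of the pattern already used in the proof of Theorem~\ref{thm:1-naive-2-naive-coincidence}.
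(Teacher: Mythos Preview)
Your proposal is correct and follows essentially the same route as the paper: reduce to showing $\dftpo = \dftpot$, verify each inclusion by checking that the larger preorder satisfies the defining clauses of the smaller, and handle the type-2/type-1 mismatch via Proposition~\ref{lem:coherence}, Lemma~\ref{lem:uptopotc-sigma1}, and the inclusion $\uptopotc \subseteq \nftpo = \dftpo$ together with transitivity. Your iterated ``chain lemma'' is exactly what the paper does for clause~(1) of the reverse inclusion (see the figure there); the only cosmetic difference is that for clause~(3) of $\dftpot \subseteq \dftpo$ the paper first collapses $Q \derivesL{\sigma}2^{+} Q'$ to a single step $Q \derivesL{\sigma}2 Q'$ using transitivity of $\derivesL{\sigma}2$ (Proposition~\ref{prop:transitive-type-2}) and then applies Proposition~\ref{lem:coherence} once, rather than iterating --- but your iterated version works just as well.
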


\begin{proof} We must only show the last coincidence; Case (2) of
\ref{def:1-delayed-preorder} is obvious. As above, we
consider some $P$ and $Q$ with $P \dftpo Q$.

\begin{enumerate}
\item[(1)] If $P \derives{\alpha} P'$, then 
$Q \derivesL{\sigma}1^{\ast}\!\derives{\alpha}\; \derivesL{\sigma}1^{\ast} Q'$ 
and hence 
$Q \derivesL{\sigma}2^{\ast}\!\derives{\alpha}\; \derivesL{\sigma}2^{\ast} Q'$
such that  $P' \dftpo Q'$ by definition of $\dftpo$. 

\item[(3)] If $P \derivesL{\sigma}2 P'$ for some $P'$, we have 
$P \derivesL{\sigma}1 P''$ for some $P''$ such that 
$P' \uptopotc P''$ by Proposition~\ref{lem:coherence}.
Thus, $Q \derivesL{\sigma}1^{+} Q''$ and, hence, $Q \derivesL{\sigma}2^{+} Q''$ 
for some $Q''$ with $P'' \dftpo Q''$ by the definition of $\dftpo$.
Further, $\uptopotc$ $\subseteq$ $\nftpo$ $=\dftpo$ by 
Proposition~\ref{prop:up-to-naives}(\ref{part:2-up-to-naives}) and \cite{LV04},
hence $P' \uptopotc P''$ implies $P' \dftpo P''$; we are done by
transitivity of $\dftpo$.
\end{enumerate}
 
For the reverse inclusion, we consider $(P,Q) \in \dftpot$.
For a smooth presentation, we start with the simulation of a time step.
 
\begin{enumerate}
 
\item[(3)] If $P \derivesL{\sigma}1 P'$, then $P \derivesL{\sigma}2 P'$ as well.
Then, we get some $Q'$ with $P' \dftpot Q'$ and $Q \derivesL{\sigma}2^{+} Q'$, i.e. 
$Q \derivesL{\sigma}2 Q'$ by transitivity. This gives us $Q \derivesL{\sigma}1 Q''$
with $Q' \uptopotc Q''$ by Proposition~\ref{lem:coherence}. Since 
$\nftpo$ and $\dftpo$ coincide and are included in $\dftpot$, we get
$Q' \dftpot Q''$ with 
Proposition~\ref{prop:up-to-naives}(\ref{part:2-up-to-naives}), and are
done by transitivity of $\dftpot$.


\item[(1)] For this part, cf.\ the figure below: if $P \derives{\alpha} P'$, then 
$Q \equiv Q_{0} \derivesL{\sigma}2 \dots \derivesL{\sigma}2 Q_{n} 
\derives{\alpha} Q_{n+1} 
\derivesL{\sigma}2 \dots \derivesL{\sigma}2 Q_{m} \equiv Q'$ for some $Q'$ 
and some $n$ and $m$ with $0 \le n < m$ such that $P'\;\dftpot\; Q'$ by 
definition of $\dftpot$.  
Considering that $Q_{0}\uptopotc Q_{0}$ by definition, we obtain
$Q_{0} \derivesL{\sigma}1 Q_{1}'\derivesL{\sigma}1 \dots \derivesL{\sigma}1 Q_{n}' 
\derives{\alpha} Q_{n+1}' 
\derivesL{\sigma}1 \dots \derivesL{\sigma}1 Q_{m}'$ for some $Q_{m}'$ 
with $Q_{m} \uptopotc Q_{m}'$ from 
Lemma~\ref{lem:uptopotc-sigma1} and
Proposition~\ref{prop:up-to-naives}(\ref{part:2-up-to-naives}).
As above, we derive $Q_{m}\;\dftpot\; Q_{m}'$ and are
done by transitivity of $\dftpot$ .\qed
\end{enumerate}
\end{proof}

\ifpdf
	\begin{figure}[htbp]
	\phantom{.}\hfill\includegraphics[clip=true]{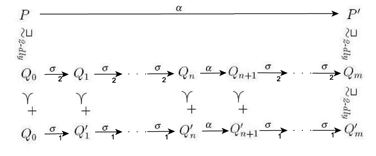}\hfill\phantom{.}
	\caption{}
	\label{fig-delay}
\end{figure}

\else
	\begin{figure}[htbp]
	\phantom{.}\hfill\includegraphics[scale=0.5, clip=true, angle=270]{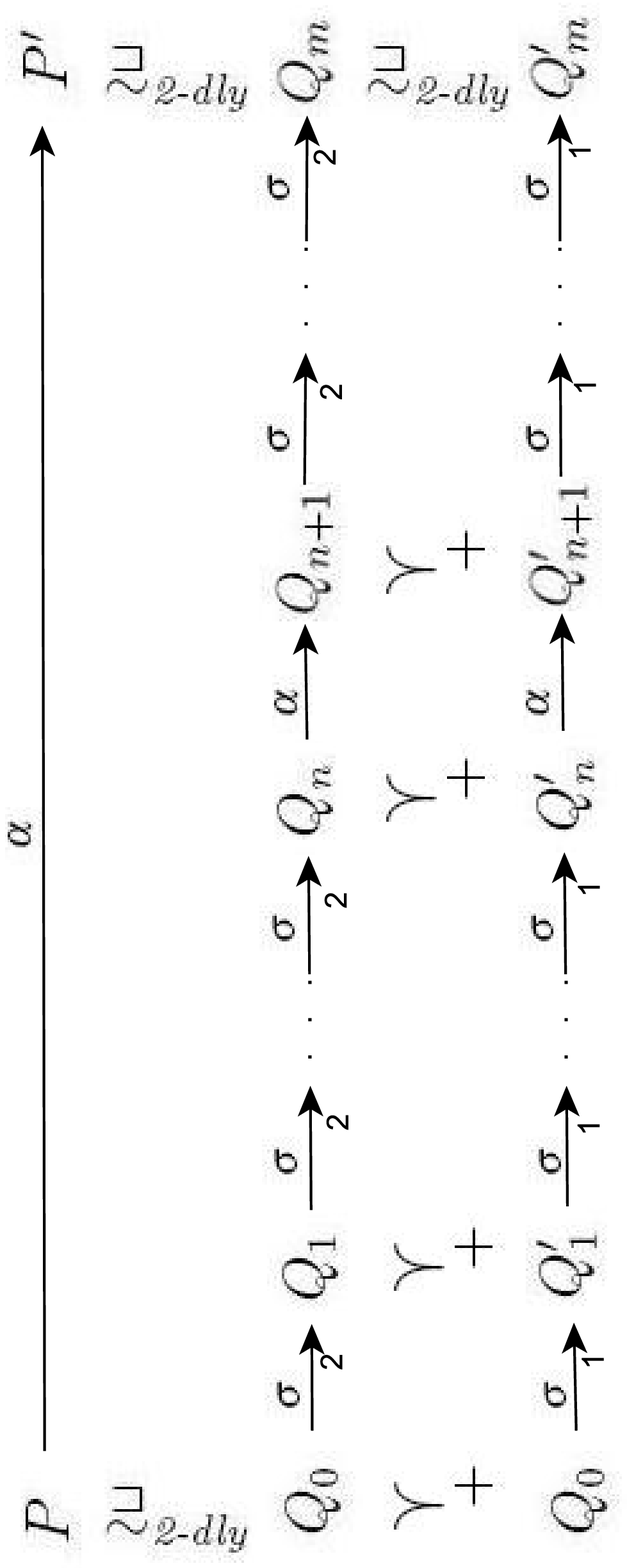}\hfill\phantom{.}
	\caption{}
	\label{fig-delay}
	\end{figure}
\fi

Clearly, any 1-naive faster-than relation is also a 1-delayed one, 
as well as any 2-naive faster-than relation a 2-delayed one.
But the sets of delayed relations of type 1, type 2 resp. are
incomparable, and this also holds for the naive relations. We come back
to this issue in Sec.~\ref{sec:strong-faster-than-precongruences}
in a slightly different setting.

\section{Indexed faster-than preorder}
\label{sec:indexed-faster-than-preorder}

The second variant of a faster-than preorder in \cite{LV04} is the 
indexed faster-than preorder, formalizing the idea of an 
account for time steps for the faster process.
If a time step of the slower process is not (or cannot be) simulated 
immediately by the faster process, 
then this time step is credited and might be withdrawn if the process 
performs this time step later on.
Obviously, the account balance may never be negative. This much more
complicated variant answers the question how time steps of the slower process
are matched.

\begin{defn}\rm For $i \in \{1,2\}$, a family $(\mathcal{R}_{j})_{j \in \nat}$ 
of relations over~$\proc$, 
  indexed by natural numbers (including~$0$), is a
  \emph{family of i-indexed faster-than relations} if,
  for all $j \in \nat$, $\pair{P}{Q}\in
  \mathcal{R}_{j}$ and $\alpha \in \act$:
  \begin{enumerate}
  \item\label{part:1-j-preorder} $P \derives{\alpha} P'$ implies
    $\exists Q'.\, 
    Q \derives{\alpha} Q'$ and $\pair{P'}{Q'}\in
    \mathcal{R}_{j}$.
  \item\label{part:2-j-preorder} $Q \derives{\alpha} Q'$ implies
    $\exists P'.\, P \derives{\alpha} P'$ and $\pair{P'}{Q'}\in
    \mathcal{R}_{j}$.
  \item\label{part:3-j-preorder} $P \derivesL{\sigma}i P'$ implies
    (a)~$\exists Q'.\, Q \derivesL{\sigma}i Q'$ and $\pair{P'}{Q'}\in
    \mathcal{R}_{j}$, or\\
    \phantom{$P \derivesL{\sigma}i P'$ implies}
    (b)~$j>0$ and $\pair{P'}{Q} \in \mathcal{R}_{j-1}$.
  \item\label{part:4-i-preorder} $Q \derivesL{\sigma}i Q'$ implies
    (a)~$\exists P'.\, P \derivesL{\sigma}i P'$ and $\pair{P'}{Q'}\in
    \mathcal{R}_{j}$, or\\
    \phantom{$Q \derivesL{\sigma}i Q'$ implies}
    (b)~$\pair{P}{Q'} \in \mathcal{R}_{j+1}$.
  \end{enumerate}
  We write $P \,\ftpo_{i,j}\, Q$ if $\pair{P}{Q}\in \mathcal{R}_{j}$ for
  some family of i-indexed faster-than relations~$(\mathcal{R}_{j})_{j
    \in \nat}$ and call $\ftpo_{i,j}$ \emph{i,j-indexed faster-than preorder}.
\label{def:indexed-preorder}\qed
\end{defn}

The latter notion is an abuse, since these relations are not really preorders.
$\cite{LV04}$ proves that $\ftpo_{1,0}$ coincides with the 1-naive faster-than 
preorder. We would have expected to be able to 
prove the same coincidence result in the new setting. 
Unfortunately, this has turned out to be wrong, which can be explained 
by the absence of time determinism. 
According to their definitions, $\ftpo_{2,0} \subseteq \nftpot$ obviously
holds.
However, the reverse inclusion fails; consider the following counterexample:
\\
Let $P \df \tau.\nil\,|\,\sigma.\sigma.\tau.\nil$ and 
$Q \df \sigma.\tau.\nil\,|\,\sigma.\sigma.\tau.\nil$. 
Clearly, $P$ is faster than $Q$ in the sense of a naive faster-than preorder. 
In the sequel, we will try to build a family of relations 
$(\mathcal{R}_{j})_{j \in \nat}$ such that 
$(P,Q) \in \ftpo_{2,0}$ holds.
Hence, we put $(P,Q)$ into $\mathcal{R}_{0}$ and first consider the 'real' 
type-2 time step $Q \derivesL{\sigma}2 \tau.\nil\,|\,\tau.\nil$; since
$P$ is not able to match this behaviour, as a time step is preempted by an 
urgent $\tau$, we are forced to credit this time step and hence obtain:
\\

$R_{0} \df \{(\tau.\nil\,|\,\sigma.\sigma.\tau.\nil,\, 
\sigma.\tau.\nil\,|\,\sigma.\sigma.\tau.\nil), \dots \}$

\begingroup\leftskip=2cm $R_{1} \df \{(\tau.\nil\,|\,\sigma.\sigma.\tau.\nil,\, 
\tau.\nil\,|\,\tau.\nil ), \dots \}$ \endgroup
\\

Now we consider $\tau.\nil\,|\,\sigma.\sigma.\tau.\nil \derives{\tau}
\nil |\,\sigma.\sigma.\tau.\nil$ of $P$, which '$Q$' can either mimic 
with \\
$\tau.\nil\,|\,\tau.\nil \derives{\tau} \nil |\,\tau.\nil$ or 
$\tau.\nil\,|\,\tau.\nil \derives{\tau} \tau.\nil |\,\nil$. 
Since the resulting processes have the same functional and waiting behaviour, 
we may consider any of them. Thus, we so far know:\\

$R_{0} \df \{(\tau.\nil\,|\,\sigma.\sigma.\tau.\nil,\, \sigma.\tau.\nil\,|\,\sigma.\sigma.\tau.\nil), \dots \}$

\begingroup\leftskip=2cm $R_{1} \df \{(\tau.\nil\,|\,\sigma.\sigma.\tau.\nil,\, \tau.\nil\,|\,\tau.\nil),(\nil\,|\,\sigma.\sigma.\tau.\nil,\, \nil\,|\,\tau.\nil), \dots \}$\endgroup
\\

Finally, $\nil\,|\,\sigma.\sigma.\tau.\nil$ may perform 
$\nil\,|\,\sigma.\sigma.\tau.\nil \derivesL{\sigma}2 \nil\,|\,\sigma.\tau.\nil$.
As this time step cannot be simulated by $Q$, we have to withdraw the credited 
time step and put $(\nil\,|\,\sigma.\tau.\nil,\, \nil\,|\,\tau.\nil)$ in the 
relation $\mathcal{R}_{0}$. This leads
to a contradiction. \\

Summarizing, the problem lies in the fact that the slower process $Q$ 
performs a 'real' type-2 time step and skips a $\sigma$-prefix, 
but only one time step is credited for the faster process $P$.
The amortisation mechanism does not ensure that $P$ can match this 
'real' type-2 time step properly later; it instead wastes its credit on
a type-1 time step. This cannot happen in a sensible setting with time 
determinism.
We leave the repair of this defect by altering the definition of 
the indexed faster-than 
preorder for future work.
 

\section{Strong faster-than precongruence}
\label{sec:strong-faster-than-precongruences}

A shortcoming of the 1-naive faster-than preorder is that it is not compositional
wrt. $|$. 
As an example consider the processes $P \df \sigma.a.0$ and 
$Q \df a.0$ for which $P \nftpoi Q$ holds: the time step 
$\sigma.a.0 \derivesL{\sigma}i a.0$ of $P$
is matched by the time step $a.0 \derivesL{\sigma}i a.0$, i.e. $Q$ idles. 
Yet, if we compose both processes in parallel with 
$R \df \overline{a}.\nil$, then we observe
that $P\,|\,R \;\nftpoi \; Q\,|\,R$ does not hold as the clock transition 
$P\,|\,R \derivesL{\sigma}i a.\nil|\overline{a}.\nil$
cannot be matched  due to an urgent $\tau$.

In any case, it is intuitively  suggestive to exclude pairs   
$\pair{\sigma.P}{P}$ from a faster-than preorder.
In \cite{LV04}, the 1-naive faster-than preorder is modified taking urgent sets
into account: if $Q$ performs a time step and $a\in \urgent Q$, then a context
process does not really have to wait for synchronisation on $a$, $Q$ ist
fast on $a$; thus,
when matching a time step the faster process must have a larger urgent set.
The resulting preorder turns out to be the largest precongruence contained
in $\nftpo$; it is called strong faster-than precongruence. Again,
we adopt the definition in our setting.

\begin{defn}\rm
  For $i \in \{1,2\}$,
  a relation $\mathcal{R}\subseteq \proc\times\proc$ is a \emph{strong
    i-faster-than relation} if the following conditions hold for all
  $\pair{P}{Q}\in \mathcal{R}$ and $\alpha \in \act$.
  \begin{enumerate}
  \item $P \derives{\alpha} P'$ implies $\exists Q'.\, Q
    \derives{\alpha} Q'$ and $\pair{P'}{Q'}\in \mathcal{R}$.
  \item $Q \derives{\alpha} Q'$ implies $\exists P'.\, P
    \derives{\alpha} P'$ and $\pair{P'}{Q'}\in \mathcal{R}$.
  \item $P \derivesL{\sigma}i P'$ implies $\urgent{Q} \subseteq
    \urgent{P}$ and $\exists Q'.\, Q \derivesL{\sigma}i Q'$ and
    $\pair{P'}{Q'}\in \mathcal{R}$.
  \end{enumerate}
  We write $P \,\ftpoi\, Q$ if $\pair{P}{Q}\in \mathcal{R}$ for some
  strong i-faster-than relation~$\mathcal{R}$ and call $\ftpoi$ \emph{strong i-faster-than precongruence}.
\label{def:strong-precongruence}\qed
\end{defn}

Clearly, $\ftpoi$ is contained in $\nftpoi$ for $i \in \{1,2\}$.
As usual, it is easy to prove that $\ftpoi$ is the largest strong i-faster-than 
relation and a preorder.  Since the $\nftpoi$ coincide, they contain the same 
largest precongruence. Hence, the expected coincidence result also shows 
the most important fact that $\twoftpo$ is this largest precongruence.
We first present the following result.
\begin{prop}
  The relation~$\uptopo$ satisfies the
  defining clauses of a strong 1-faster-than relation, also on open terms;
  hence, $\uptopo$ restricted to processes is a strong 1-faster-than relation and
  $\uptopo_{| \proc \times \proc} \;\,\df\,\;
  \uptopo \cap\, (\proc \times \proc)
  \;\subseteq \oneftpo$. The same holds for~$\uptopotc$.
\label{prop:up-to-precongruence}  
\end{prop}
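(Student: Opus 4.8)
The plan is to prove Proposition~\ref{prop:up-to-precongruence} in essentially the same way that Proposition~\ref{prop:up-to-naives}(\ref{part:1-up-to-naives}) is proved, but carrying along the extra urgent-set requirement in clause~(3) of Definition~\ref{def:strong-precongruence}. Since Proposition~\ref{prop:up-to-naives}(\ref{part:1-up-to-naives}) already establishes that $\uptopo$ is a strong bisimulation for action transitions (clauses (1) and (2)) and that it simulates type-1 time steps, the only genuinely new obligation is: whenever $P' \uptopo P$ and $P' \derivesL{\sigma}1 P_1'$, then $\urgent{P} \subseteq \urgent{P'}$ and there is some $P_1$ with $P \derivesL{\sigma}1 P_1$ and $P_1' \uptopo P_1$. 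The inclusion $\urgent{P} \subseteq \urgent{P'}$ is exactly Lemma~\ref{lem:upto-aux1-rec}(\ref{part:1-upto-urgent}), so the whole content is bookkeeping: re-run the structural induction on the derivation of $P' \uptopo P$ (equivalently, on the definition of $\uptopo$) that underlies Proposition~\ref{prop:up-to-naives}, and at each point where a type-1 time step of $P'$ is matched by one of $P$, additionally invoke Lemma~\ref{lem:upto-aux1-rec}(\ref{part:1-upto-urgent}) for the urgent-set side condition.

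Concretely, I would go through the clauses of Definition~\ref{def:uptopo} defining $\uptopo$. For clause~(1), $P' \equiv P$, everything is trivial. For clause~(2), $P \equiv \sigma.P'$: a time step $P' \derivesL{\sigma}1 P_1'$ is matched by $\sigma.P' \derivesL{\sigma}1 P_1'$ via (tPre), with $P_1' \uptopo P_1'$ by clause~(1); and $\urgent{\sigma.P'} = \emptyset \subseteq \urgent{P'}$. The compositional clauses (3)--(6) for $|$, $+$, $\res{L}$, $[f]$ follow by analysing the possible SOS-rules for the time step of the composite $P'$-term, applying the induction hypothesis to the components, reassembling the matching time step of the $P$-term, and using the definition of $\uptopo$ on the components to relate the results; for the urgent sets one uses the inductive structure of $\urgent{\cdot}$ from Table~\ref{table:urgent} together with the component-wise inclusions — the only slightly delicate point is the $|$-case, where one must check that the side condition $\tau \notin \urgent{P|Q}$ for (tCom) on the slower side follows from $\tau \notin \urgent{P'|Q'}$ on the faster side, but this is immediate from $\urgent{P|Q} \subseteq \urgent{P'|Q'}$. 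Clause~(7), the recursion case $P' \equiv P_0[\mu x.\,P / x]$ with $P_0 \uptopo P$ and $x$ guarded in $P$, is handled exactly as in Proposition~\ref{prop:up-to-naives}: guardedness of $x$ forces the time step of $P_0[\mu x.\,P/x]$ to originate from a time step $P_0 \derivesL{\sigma}1 P_0'$ (no transition can "enter" the substituted copies because they sit behind action prefixes), the induction hypothesis gives a matching $P \derivesL{\sigma}1 P''$ with $P_0' \uptopo P''$, then (tRec) gives $\mu x.\,P \derivesL{\sigma}1 P''[\mu x.\,P/x]$ and Lemma~\ref{lem:upto-aux1-rec}(\ref{part:1-upto-aux1}) gives $P_0'[\mu x.\,P/x] \uptopo P''[\mu x.\,P/x]$; the urgent-set inclusion is again just Lemma~\ref{lem:upto-aux1-rec}(\ref{part:1-upto-urgent}).

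For the statement about $\uptopotc$, the argument is the same as for Proposition~\ref{prop:up-to-naives}(\ref{part:2-up-to-naives}): one shows that strong 1-faster-than relations are closed under relational composition (a routine diagram chase, where the urgent-set condition composes transitively: $\urgent{R} \subseteq \urgent{Q} \subseteq \urgent{P}$) and under union, so $\uptopotc = \bigcup_{n \ge 0} \uptopo^{n}$ is again a strong 1-faster-than relation on terms, hence its restriction to $\proc$ is contained in $\oneftpo$. I expect no real obstacle here — the whole proposition is a variation on material already proved — and I would in fact phrase the write-up as "this is proved exactly like Proposition~\ref{prop:up-to-naives}, additionally using Lemma~\ref{lem:upto-aux1-rec}(\ref{part:1-upto-urgent}) to discharge the urgent-set condition in clause~(3)," spelling out only the $|$-case side condition if anything. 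The one place to stay alert is that $\uptopo$ is defined on open terms and the urgent sets of open terms (variables contribute $\emptyset$) behave monotonically under $\uptopo$, which is precisely why Lemma~\ref{lem:upto-aux1-rec}(\ref{part:1-upto-urgent}) is stated for terms and not just processes.
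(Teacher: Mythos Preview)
Your approach is correct and matches the paper's, which simply cites \cite{LV04} for the $\uptopo$ statement and refers back to the argument of Proposition~\ref{prop:up-to-naives} for $\uptopotc$; you have merely unfolded what that citation contains.

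There is, however, one slip in your treatment of clause~(2) of Definition~\ref{def:uptopo}. You write that a time step $P' \derivesL{\sigma}1 P_1'$ is ``matched by $\sigma.P' \derivesL{\sigma}1 P_1'$ via (tPre)''. But Rule~(tPre) only yields $\sigma.P' \derivesL{\sigma}1 P'$; the transition $\sigma.P' \derivesL{\sigma}{} P_1'$ from $P' \derivesL{\sigma}{} P_1'$ would require Rule~(tNew), which is available for type-2 only. The correct matching is $\sigma.P' \derivesL{\sigma}1 P'$ by (tPre), and then $P_1' \uptopo P'$ follows from Lemma~\ref{lem:up-to-sigma}(\ref{part:1-up-to-sigma}) applied to $P' \derivesL{\sigma}1 P_1'$. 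With this fix the case goes through (and the urgent-set inclusion $\urgent{\sigma.P'} = \emptyset \subseteq \urgent{P'}$ is as you stated).
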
  

The first statement is given in \cite{LV04}; the second can be shown as we have
done above for
Proposition~\ref{prop:up-to-naives}.
Now the following coincidence result can be proven as indicated in the proof
of Theorem~\ref{thm:1-naive-2-naive-coincidence}. 

\begin{theorem}
  The preorders~$\oneftpo$ and~$\twoftpo$ coincide.
\label{thm:1-2-strong-faster-coincidence}
\end{theorem}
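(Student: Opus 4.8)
The plan is to mirror the proof of Theorem~\ref{thm:1-naive-2-naive-coincidence} almost verbatim, inserting exactly the bracketed text that was deliberately left as a placeholder there. The author's hint ("can be proven as indicated in the proof of Theorem~\ref{thm:1-naive-2-naive-coincidence}") tells us that the only extra ingredient is tracking urgent sets through the argument, and Proposition~\ref{prop:up-to-precongruence} gives us the strong-precongruence analogue of Proposition~\ref{prop:up-to-naives}, i.e. $\uptopo$ and $\uptopotc$ (restricted to processes) are strong 1-faster-than relations, so in particular $\uptopotc \subseteq \oneftpo$.

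For the inclusion $\oneftpo \subseteq \twoftpo$, I would show that $\oneftpo$ itself is a strong 2-faster-than relation. Clauses (1) and (2) are about action transitions and hold immediately since $\oneftpo$ already satisfies them. For clause (3), suppose $P \oneftpo Q$ and $P \derivesL{\sigma}2 P'$. By Proposition~\ref{lem:coherence} there is $P''$ with $P \derivesL{\sigma}1 P''$ and $P' \uptopotc P''$. Since $\oneftpo$ is a strong 1-faster-than relation, this type-1 step forces $\urgent{Q} \subseteq \urgent{P}$ and yields $Q''$ with $Q \derivesL{\sigma}1 Q''$ and $P'' \oneftpo Q''$; then $Q \derivesL{\sigma}2 Q''$ as well. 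From $P' \uptopotc P''$ and Proposition~\ref{prop:up-to-precongruence} we get $P' \oneftpo P''$, and transitivity of $\oneftpo$ gives $P' \oneftpo Q''$. So the required urgent-set inclusion and matching step both hold, establishing clause (3).

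For the reverse inclusion $\twoftpo \subseteq \oneftpo$, I would show $\twoftpo$ is a strong 1-faster-than relation. Again only clause (3) needs work: suppose $P \twoftpo Q$ and $P \derivesL{\sigma}1 P'$. Then also $P \derivesL{\sigma}2 P'$, so by definition of $\twoftpo$ we get $\urgent{Q} \subseteq \urgent{P}$ and some $Q'$ with $Q \derivesL{\sigma}2 Q'$ and $P' \twoftpo Q'$. By Proposition~\ref{lem:coherence} there is $Q''$ with $Q \derivesL{\sigma}1 Q''$ and $Q' \uptopotc Q''$. From $\uptopotc \subseteq \oneftpo \subseteq \twoftpo$ (the first inclusion by Proposition~\ref{prop:up-to-precongruence}, the second because $\twoftpo$ is coarser — or, to stay self-contained, just $\uptopotc \subseteq \twoftpo$, which follows once we know $\twoftpo$ is a strong 1-faster-than relation by a standard fixpoint argument, or more simply can be obtained directly) we get $Q' \twoftpo Q''$, and transitivity of $\twoftpo$ gives $P' \twoftpo Q''$, matching the type-1 step with the required urgent inclusion.

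The one subtlety — and the step I would be most careful about — is the urgent-set bookkeeping in the reverse direction: I must confirm that when $\twoftpo$ performs the matching, the urgent inequality obtained ($\urgent{Q} \subseteq \urgent{P}$) is literally the one demanded by the strong 1-faster-than definition for the step $P \derivesL{\sigma}1 P'$, not some weaker variant attached to the intermediate $Q''$. It is, because the urgent constraint in Definition~\ref{def:strong-precongruence}(3) is a condition on the original pair $(P,Q)$, independent of which target is chosen, and $P \derivesL{\sigma}1 P'$ being also a type-2 step hands us exactly $\urgent{Q}\subseteq\urgent{P}$. There is no genuine obstacle beyond this; the real work was already done in Section~3 (Proposition~\ref{lem:coherence}, Lemma~\ref{lem:up-to-sigma}) and in Proposition~\ref{prop:up-to-precongruence}, and this theorem is essentially a corollary assembled from them exactly as Theorem~\ref{thm:1-naive-2-naive-coincidence} was assembled from the naive-case analogues.
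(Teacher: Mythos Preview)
Your proposal is correct and follows essentially the same route as the paper: the paper's own proof is literally the proof of Theorem~\ref{thm:1-naive-2-naive-coincidence} with the bracketed urgent-set clauses inserted and Proposition~\ref{prop:up-to-precongruence} in place of Proposition~\ref{prop:up-to-naives}. The only spot where you over-complicate things is the justification of $\uptopotc \subseteq \twoftpo$ in the reverse direction---the paper simply uses the chain $\uptopotc \subseteq \oneftpo \subseteq \twoftpo$, where the second inclusion is exactly what was established in the first half of the proof.
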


Although $\oneftpo$ and~$\twoftpo$ coincide, this does not hold for the
strong 1- and 2-faster-than relations. In fact, in some cases one can find a
smaller 2-faster-than relation because $Q$ has more possibilities to match
a clock transition. In other cases, one can find a smaller 
1-faster-than relation because $P$ can reach fewer processes by type-1
time steps; see \cite{Ilt09} for examples.

\bigskip

Another contribution of this paper is that we show how to combine these two advantages in
a mixed setting. The new \emph{strong combined faster-than} or 
\emph{strong c-faster-than precongruence} matches type-1 time steps with 
type-2 time steps. (The same idea is also studied on the naive level in
\cite{Ilt09}.)

\begin{defn}\rm
  A relation $\mathcal{R}\subseteq \proc\times\proc$ is a \emph{strong
    c-faster-than relation} if the following conditions hold for all
  $\pair{P}{Q}\in \mathcal{R}$ and $\alpha \in \act$.
  \begin{enumerate}
  \item $P \derives{\alpha} P'$ implies $\exists Q'.\, Q
    \derives{\alpha} Q'$ and $\pair{P'}{Q'}\in \mathcal{R}$.
  \item $Q \derives{\alpha} Q'$ implies $\exists P'.\, P
    \derives{\alpha} P'$ and $\pair{P'}{Q'}\in \mathcal{R}$.
  \item $P \derivesL{\sigma}1 P'$ implies $\urgent{Q} \subseteq
    \urgent{P}$ and $\exists Q'.\, Q \derivesL{\sigma}2 Q'$ and
    $\pair{P'}{Q'}\in \mathcal{R}$.
  \end{enumerate}
  We write $P \,\cftpo\, Q$ if $\pair{P}{Q}\in \mathcal{R}$ for some
  strong c-faster-than relation~$\mathcal{R}$ and call $\cftpo$ \emph{strong c-faster-than precongruence}.\qed
\label{def:c-strong-precongruence}
\end{defn}

As usual, $\cftpo$ is the largest strong c-faster-than relation.
However, note that it is not clear that $\cftpo$ is transitive.
This will follow with our next theorem. 

\begin{theorem}
  The preorders~$\oneftpo$ and~$\cftpo$ coincide.
\label{thm:1-c-strong-faster-coincidence}
\end{theorem}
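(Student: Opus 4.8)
The plan is to mirror the structure of the proof of Theorem~\ref{thm:1-2-strong-faster-coincidence} (equivalently, the bracketed version of Theorem~\ref{thm:1-naive-2-naive-coincidence}), adapting the two inclusions to the mixed matching discipline of Definition~\ref{def:c-strong-precongruence}. Both directions will use Proposition~\ref{lem:coherence} to translate between type-1 and type-2 time steps, Proposition~\ref{prop:up-to-precongruence} to turn $\uptopotc$-steps into $\oneftpo$-steps, and the fact that $\derivesL{\sigma}2$ is transitive (Proposition~\ref{prop:transitive-type-2}) to collapse chains of type-2 steps into single ones. The role of the urgent-set condition in clause~(3) is what keeps this inside the precongruence world rather than the naive world; since all of $\oneftpo$, $\twoftpo$, $\cftpo$ carry the same clause $\urgent{Q}\subseteq\urgent{P}$ on a time step, the urgent bookkeeping is routine once the time-step matching is handled.

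First I would show $\oneftpo \subseteq \cftpo$ by checking that $\oneftpo$ itself is a strong c-faster-than relation. Clauses~(1) and~(2) are immediate since they are literally the same as for $\oneftpo$. For clause~(3), suppose $(P,Q)\in\oneftpo$ and $P\derivesL{\sigma}1 P'$. By definition of $\oneftpo$ we get $\urgent{Q}\subseteq\urgent{P}$ and some $Q''$ with $Q\derivesL{\sigma}1 Q''$ and $(P',Q'')\in\oneftpo$; since every type-1 step is a type-2 step, $Q\derivesL{\sigma}2 Q''$, and we are done with $Q'=Q''$. Hence $\oneftpo$ satisfies Definition~\ref{def:c-strong-precongruence}, so $\oneftpo\subseteq\cftpo$.

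For the reverse inclusion $\cftpo \subseteq \oneftpo$, I would show that $\cftpo$ is a strong 1-faster-than relation. Again clauses~(1) and~(2) are immediate. For clause~(3), let $(P,Q)\in\cftpo$ and $P\derivesL{\sigma}1 P'$. By definition of $\cftpo$ we obtain $\urgent{Q}\subseteq\urgent{P}$ and some $Q'$ with $Q\derivesL{\sigma}2 Q'$ and $(P',Q')\in\cftpo$. Now apply Proposition~\ref{lem:coherence} to the type-2 step $Q\derivesL{\sigma}2 Q'$: there is some $Q''$ with $Q\derivesL{\sigma}1 Q''$ and $Q'\uptopotc Q''$. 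By Proposition~\ref{prop:up-to-precongruence}, $Q'\uptopotc Q''$ gives $Q'\oneftpo Q''$, and then (using that $\oneftpo\subseteq\cftpo$, already established, and transitivity of $\cftpo$, or directly composing relations) $(P',Q'')\in\cftpo$. Thus $P\derivesL{\sigma}1 P'$ is matched by $Q\derivesL{\sigma}1 Q''$ with the right urgent inclusion and $(P',Q'')\in\cftpo$, so $\cftpo$ is a strong 1-faster-than relation and $\cftpo\subseteq\oneftpo$. Combining the two inclusions yields the coincidence, and as noted after Definition~\ref{def:c-strong-precongruence} this also establishes transitivity of $\cftpo$ as a by-product.

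The step I expect to be the main obstacle is the bookkeeping in the reverse inclusion: one must be careful about the order in which $\oneftpo\subseteq\cftpo$, transitivity of $\cftpo$, and the $\uptopotc$-to-$\oneftpo$ translation are invoked, since at the point of proving $\cftpo$ is a strong 1-faster-than relation we cannot yet freely assume $\cftpo$ is transitive. The clean way around this is to argue in terms of relation composition — show that $\cftpo$ composed with $\oneftpo$ (on the right) is again a strong c-faster-than relation, hence contained in $\cftpo$ — so that $(P',Q')\in\cftpo$ together with $Q'\oneftpo Q''$ yields $(P',Q'')\in\cftpo$ without circularity. Everything else (the urgent-set inclusions, the handling of action transitions) is entirely routine, exactly as in the proofs already given in Sections~4 and~7.
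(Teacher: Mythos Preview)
Your proposal is correct, and the forward inclusion $\oneftpo \subseteq \cftpo$ is exactly as in the paper. For the reverse inclusion, you correctly identify that one cannot simply show $\cftpo$ is a strong 1-faster-than relation using transitivity of $\cftpo$, since transitivity is not yet available; your fix via relation composition is sound.

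The paper resolves the same circularity slightly differently and a bit more directly: instead of first showing that $\cftpo \circ {\oneftpo}$ is a strong c-faster-than relation and then returning to prove that $\cftpo$ is a strong 1-faster-than relation, it defines in one stroke
\[
\mathcal{R} \;=\; \{\,(P,Q)\mid \exists R.\; P \,\cftpo\, R \,\uptopotc\, Q\,\}
\]
and checks that $\mathcal{R}$ is itself a strong 1-faster-than relation. The time-step clause is handled by Lemma~\ref{lem:uptopotc-sigma1}: from $R \derivesL{\sigma}2 R'$ and $R \uptopotc Q$ one directly obtains $Q \derivesL{\sigma}1 Q'$ with $R' \uptopotc Q'$, so the pair $(P',Q')$ lands back in $\mathcal{R}$; the urgent-set inclusion comes from Lemma~\ref{lem:upto-aux1-rec}(\ref{part:1-upto-urgent}) along the $\uptopotc$-chain. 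Since $\uptopotc$ is reflexive, $\cftpo \subseteq \mathcal{R} \subseteq \oneftpo$ follows immediately. Your route reaches the same destination but with an extra layer (a c-relation step before the 1-relation step), and it implicitly reproves the content of Lemma~\ref{lem:uptopotc-sigma1} using Proposition~\ref{lem:coherence} plus transitivity of $\oneftpo$. The mention of Proposition~\ref{prop:transitive-type-2} is not actually needed in either argument.
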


\begin{proof}
First, take processes $P$ and $Q$ such that $P \oneftpo Q$; we only
have to consider the matching of time steps.
If $P \derivesL{\sigma}1 P'$ for some process $P'$, then 
$\urgent{Q} \subseteq \urgent{P}$ and $Q \derivesL{\sigma}1 Q'$ for some $Q'$ 
satisfying $P' \oneftpo  Q'$ by definition of $\oneftpo$.
We are done since we also have $Q \derivesL{\sigma}2 Q'$. 

\smallskip

For the reverse inclusion $\cftpo \;\subseteq \;\oneftpo$, define the relation 
$\mathcal{R}$ by $(P,Q) \in \mathcal{R}$ if and only if 
$\exists R \in \proc.\;P \cftpo R \uptopotc Q$ for $P, Q \in \proc$.
This relation contains $\cftpo$ since $\uptopotc$ is reflexive; hence, it
suffices to check that $\mathcal{R}$ is a strong 1-faster-than relation;
consider $P \cftpo R \uptopotc Q$.

\begin{enumerate}
\item [1.] If $P \derives{\alpha} P'$ for some $P'$, the definition of $\cftpo$ 
shows $R \derives{\alpha} R'$ for some process $R'$ with $P' \cftpo R'$. 
Since $\uptopotc$ is a strong 1-faster-than relation by 
Proposition~\ref{prop:up-to-precongruence}, this implies
$Q \derives{\alpha} Q'$ for some $Q'$ such that $R' \uptopotc Q'$.

\item[2.] The case $Q \derives{\alpha} Q'$ for some $Q'$ is analogous to Part (1).

\item[3.] If $P \derivesL{\sigma}1 P'$ for some $P'$, then 
$\urgent{R} \subseteq \urgent{P}$ and $R \derivesL{\sigma}2 R'$ for some process 
$R'$ with $P' \cftpo R'$ by definition of $\cftpo$. Due to 
$R \derivesL{\sigma}2 R'$, we infer $Q \derivesL{\sigma}1 Q'$ for some $Q'$
satisfying $R' \uptopotc Q'$ from Lemma \ref{lem:uptopotc-sigma1}.
Moreover, we know $\urgent{Q} \subseteq \urgent{R}$ due to $R \uptopotc Q$ by 
successive application of Lemma~\ref{lem:upto-aux1-rec}.5, implying
$\urgent{Q} \subseteq \urgent{P}$. \qed
\end{enumerate}
\end{proof}

The next result shows that strong c-faster-than relations indeed provide 
more possibilities to prove the strong faster-than precongruence.

\begin{prop}\text{}
\begin{enumerate}
   \item\label{part:1-coherence-relations}
Every strong 1-faster-than relation $\mathcal{R}$ is a strong c-faster-than relation. 
   \item\label{part:2-coherence-relations}
Every strong 2-faster-than-relation $\mathcal{R}$ is a strong c-faster-than relation.     
\end{enumerate}
\label{lem:coherence-relations}
\end{prop}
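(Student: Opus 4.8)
The plan is to verify each of the two defining conditions on time-step matching, since the two action-conditions (1) and (2) are literally identical in the definitions of strong 1-, 2-, and c-faster-than relations and so transfer with no work. For Part~(\ref{part:1-coherence-relations}), let $\mathcal{R}$ be a strong 1-faster-than relation and suppose $(P,Q)\in\mathcal{R}$. The c-faster-than condition that must be checked is condition~(3): if $P\derivesL{\sigma}1 P'$ then $\urgent{Q}\subseteq\urgent{P}$ and there is $Q'$ with $Q\derivesL{\sigma}2 Q'$ and $(P',Q')\in\mathcal{R}$. But since $\mathcal{R}$ is a strong 1-faster-than relation, $P\derivesL{\sigma}1 P'$ already gives us $\urgent{Q}\subseteq\urgent{P}$ and some $Q'$ with $Q\derivesL{\sigma}1 Q'$ and $(P',Q')\in\mathcal{R}$; and every type-1 time step is also a type-2 time step, so $Q\derivesL{\sigma}2 Q'$. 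Hence condition~(3) holds and $\mathcal{R}$ is a strong c-faster-than relation.

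For Part~(\ref{part:2-coherence-relations}), let $\mathcal{R}$ be a strong 2-faster-than relation with $(P,Q)\in\mathcal{R}$, and again we only need condition~(3) of the c-definition. Suppose $P\derivesL{\sigma}1 P'$. Then $P\derivesL{\sigma}2 P'$ as well (type-1 steps are type-2 steps), so by the strong 2-faster-than property applied to this type-2 step we obtain $\urgent{Q}\subseteq\urgent{P}$ and some $Q'$ with $Q\derivesL{\sigma}2 Q'$ and $(P',Q')\in\mathcal{R}$. This is exactly what condition~(3) of the c-definition demands, so $\mathcal{R}$ is a strong c-faster-than relation.

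There is essentially no obstacle here: both parts are immediate once one notes the syntactic inclusion $\derivesL{\sigma}1\;\subseteq\;\derivesL{\sigma}2$ together with the observation that condition~(3) of Definition~\ref{def:c-strong-precongruence} asks the faster process to make a \emph{type-1} step (the weaker antecedent) while allowing the slower process a \emph{type-2} step (the stronger consequent), so it is logically implied both by the 1-version (same antecedent, stronger step strengthened to a weaker requirement) and by the 2-version (weaker antecedent subsumed, same consequent). The only point worth stating explicitly is that $\cftpo$ itself is then the union of all strong 1- and strong 2-faster-than relations (among others), which combined with Theorems~\ref{thm:1-2-strong-faster-coincidence} and~\ref{thm:1-c-strong-faster-coincidence} is what makes the mixed relation a genuinely more flexible witness for the single precongruence. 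One could add a short remark that this proposition, together with the examples referenced in Section~\ref{sec:strong-faster-than-precongruences}, shows the inclusions are strict, but that is not needed for the proof of the statement as given.
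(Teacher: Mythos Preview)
Your proof is correct and follows essentially the same approach as the paper: both argue that conditions~(1) and~(2) are literally shared, and for condition~(3) both parts reduce to the inclusion $\derivesL{\sigma}1 \,\subseteq\, \derivesL{\sigma}2$, applied on the $Q$-side for Part~(\ref{part:1-coherence-relations}) and on the $P$-side for Part~(\ref{part:2-coherence-relations}). The extra commentary in your final paragraph is accurate but not part of the proof proper.
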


\begin{proof}
Again, we only have to look at the matching of time steps. First, take a 
strong 1-faster-than relation $\mathcal{R}$ and some  $(P,Q) \in \mathcal{R}$.
If $P \derivesL{\sigma}1 P'$, then $\urgent{Q} \subseteq \urgent{P}$ and
$Q \derivesL{\sigma}1 Q'$ for some process $Q'$ with $(P',Q') \in \mathcal{R}$ 
by definition of $\mathcal{R}$; then, also $Q \derivesL{\sigma}2 Q'$.

Now let $\mathcal{R}$ be a strong 2-faster-than relation and 
$(P,Q) \in \mathcal{R}$. If $P \derivesL{\sigma}1 P'$ for some process $P'$, 
then also $P \derivesL{\sigma}2 P'$. Hence, we obtain 
$\urgent{Q} \subseteq \urgent{P}$ and $Q \derivesL{\sigma}2 Q'$ for some 
$Q'$ such that $(P',Q') \in \mathcal{R}$ by the definition of $\mathcal{R}$.
\qed 
\end{proof}

We remark that one could also look at relations where a type-2 time step is 
matched by a type-1 time step; these lead to the same strong faster-than 
precongruence and comprise all relations that are strong 1- \emph{and} 2-faster-than
relations.

We conclude by giving processes $P$ and $Q$, where the least strong c-faster-than 
relation proving $P\cftpo Q$ is much smaller than any relation of type 1  or 2.
In the following, we write $\sigma^{n}$ for a sequence of $n$ $\sigma$-prefixes.
Let $P \df (\mu x.a.x) \;|\; (\mu x.b.x) \;|\; ((\mu x.\sigma^{n}.c.x) \;|\; (\mu x.\sigma^{n}.\overline{c}.x)) \resset{c}$ and  
$Q \df (\mu x.\sigma^{n}.a.x) \;|\; (\mu x.\sigma^{n}.b.x) \;|\; ((\mu x.\sigma^{n}.c.x) \;|\; (\mu x.\sigma^{n}.\overline{c}.x)) \resset{c}$ for some large $n$.
For a smooth presentation, we code $Q$ and its resulting processes as a subset of 
$\{ijkl\;|\;0 \le i,j,k,l \le n\}$, where $i$, $j$, $k$ and $l$ are the numbers of leading $\sigma$-prefixes for the four components.
Analogously, we code the $P$-states as elements of $\{ijkl\;|\; i,j\in\{0,n\},\;0 \le k,l \le n\}$ where $k$ and $l$ refer to the third and fourth component; $i=n$ encodes $\mu x.a.x$ (the original component as in all other cases) and $i=0$ encodes $a.\mu x.a.x$, resulting from a time step; similarly for $j$.

We first show that the following is a strong c-faster-than relation: 
$\{(ijkk,ijkk)\;|\;i,j \in \{0,n\}, 0 \le k \le n\}$, hence we obviously have $4(n+1)$ pairs.
Observe that, in each pair, the urgent set of the second process is contained in that of the first process, since $a$ and $b$ are always urgent on the $P$-side and $k$ is equal on both sides. 
Each time step on the $P$-side is of the form $ijkk \derivesL{\sigma}1 00 (k-1)(k-1)$ and is matched by the 'same'
time step of type-2 on the $Q$-side.
For the actions, observe that $a, b$ and $\tau$ set the resp. $i, j$ or $k$ to $n$ on both sides.

Next we show that a strong 2-faster-than relation for $(P,Q)$ must contain pairs
for $4n^{2} + 4$ different $P$-states: $P$ is encoded as $nnnn$, and there is 
a $\derivesL{\sigma}2$ transition to each $00kl$ with $0 \le k, l \le n-1$.
For each of these $n^{2}$ processes we have three additional ones, since action $a$ and/or action $b$ sets the first and/or second component to $n$. Furthermore, $0000$ can perform
$\tau$ to become $00nn$, and also for this process we have three additional ones.
Since $P$ can reach at least $4n^{2} + 4$ processes, a strong 2-faster-than relation must 
contain at least $4n^{2} + 4$ pairs.
Further factors $n$ are possible by replacing the $c$- and $\overline{c}$-components by more components that communicate in a ring:
e.g. $((\mu x.\sigma^{n}.c_{1}.\overline{c_{4}}.x)|(\mu x.\sigma^{n}.\overline{c_1}.c_{2}.x)|$ $(\mu x.\sigma^{n}.\overline{c_{2}}.c_{3}.x)
|(\mu x.\sigma^{n}.\overline{c_3}.c_{4}.x)) \resset{c1, \dots c4}$.

To present elements of a strong 1-faster-than relation for $(P,Q)$, we introduce the
following notation: we let $\lfloor i \rfloor_{n}$  be $n$ if $i=n$, and $0$ otherwise; 
furthermore, we write ${\it pair}(ijk)$ to denote the pair $\pair{\lfloor i \rfloor_{n}\lfloor j \rfloor_{n}\,kk }{ijkk}$ for $0 \le i, j, k \le n$.
We show now that the strong 1-faster-than relation must include all these pairs, hence we have at least $(n+1)^3$ pairs. 
By way of example, assume that $j=i-5$, $k=i-2$. Starting from ${\it pair}(nnn)$, three time steps give ${\it pair}((n-3)(n-3)(n-3))$, 
$\tau$ plus two time steps lead to ${\it pair}((n-5)(n-5)(n-2))$ and action $a$ gives ${\it pair}(n(n-5)(n-2))$. 
Now we can reach the desired pair by $n-i$ time steps.\\
Further factors $n+1$ can be obtained by adding further components $\mu x.a'.x$ to $P$ and $\mu x.\sigma^{n}.a'.x$ to $Q$; 
with each component, the factor 4 in the size of the c-relation only grows by a factor 2.

\section{Conclusion and future work}

In~\cite{LV04}, the process algebra TACS and three types of faster-than relations were 
introduced, and it was shown that the three types lead to the same faster-than preorder.
Here, we have extended the clock transitions of TACS processes by new 
time steps and studied, in this new setting, the three types of relations. 
With the exception of the indexed faster-than preorder, we have 
proved that the new definitions lead to the same preorder as in~\cite{LV04}.
We have also obtained a coincidence result for the strong faster-than precongruence 
of~\cite{LV04}, and developed a new type of faster-than relation that combines old and
new operational semantics; this combined variant leads to the same precongruence and allows smaller relations for proving that the precongruence holds.

In~\cite{Ilt09}, it is also shown that (at least on the naive level) faster-than
relations up to $\nftpoi$ can be employed, which can also lead to smaller relations
as usual. Furthermore, coincidence also holds for weak variants of the faster-than
preorder in the old and the new operational semantics.

Summarizing, we have given a number of new results that further support the
robustness of the approach in~\cite{LV04} for comparing the worst-case 
efficiency of asynchronous processes.

\smallskip
Although in \cite{Lyn96} hundreds of pages are devoted to a setting with upper time bounds only, this is rarely 
treated in process algebra. Typically, e.g. in \cite{HenReg95}, $\sigma$ represents a definite time step; 
then replacing a component by one with fewer time steps might change the overall system behaviour drastically instead
of improving efficiency.
The efficiency preorder studied in \cite{AruHen92} counts $\tau$'s; since parallel $\tau$'s count twice, while parallel $\sigma$'s are 
counted only once, this does not compare time but some other cost, as is done in \cite{AKK05}.
There is one other process algebra with upper time bounds, which has been studied in a testing scenario and related to fairness, see e.g. \cite{CorBerVog09, CorVog07}.

\smallskip

As future work, we intend to repair the defect of the indexed-faster-than relations in 
our new setting by altering its definition.
This approach is relevant since the indexed-faster-than preorder is quite a 
convincing candidate for a faster-than relation on processes. 
Another open issue is to consider combined definitions of the delayed and the weak variants. Moreover, one should study to what extent such a combined weak precongruence can be based on small relations.
Finally, to reinforce robustness of the approach further, there is at least one additional
intuitive variation of the operational semantics we will consider.

\nocite{*}

\bibliographystyle{eptcs}
\bibliography{robust}

\begin{thebibliography}{10}
\providecommand{\bibitemstart}[1]{\bibitem{#1}}
\providecommand{\bibitemend}{}
\providecommand{\bibliographystart}{}
\providecommand{\bibliographyend}{}
\providecommand{\url}[1]{\texttt{#1}}
\providecommand{\urlprefix}{Available at }
\providecommand{\bibinfo}[2]{#2}
\bibliographystart

\bibitemstart{AruHen92}
\bibinfo{author}{S.~Arun-Kumar} \& \bibinfo{author}{M.C.B. Hennessy}
  (\bibinfo{year}{1992}): \emph{\bibinfo{title}{An Efficiency Preorder for
  Processes}}.
\newblock {\sl \bibinfo{journal}{Acta Inform.}}
  \bibinfo{volume}{29}(\bibinfo{number}{8}), pp. \bibinfo{pages}{737--760}.
\bibitemend

\bibitemstart{CorBerVog09}
\bibinfo{author}{F.~Corradini}, \bibinfo{author}{M.R.~Di Berardini} \&
  \bibinfo{author}{W.~Vogler} (\bibinfo{year}{2009}):
  \emph{\bibinfo{title}{Liveness of a Mutex Algorithm in a Fair Process
  Algebra}}.
\newblock {\sl \bibinfo{journal}{Acta Informatica}} \bibinfo{volume}{46}, pp.
  \bibinfo{pages}{209--235}.
\bibitemend

\bibitemstart{CorVog07}
\bibinfo{author}{F.~Corradini} \& \bibinfo{author}{W.~Vogler}
  (\bibinfo{year}{2007}): \emph{\bibinfo{title}{Performance of pipelined
  asynchronous systems}}.
\newblock {\sl \bibinfo{journal}{J. Logic and Algebraic Programming}}
  \bibinfo{volume}{70}, pp. \bibinfo{pages}{201--221}.
\bibitemend

\bibitemstart{HenReg95}
\bibinfo{author}{M.C.B. Hennessy} \& \bibinfo{author}{T.~Regan}
  (\bibinfo{year}{1995}): \emph{\bibinfo{title}{A Process Algebra for Timed
  Systems}}.
\newblock {\sl \bibinfo{journal}{Inform.\ and Comp.}}
  \bibinfo{volume}{117}(\bibinfo{number}{2}), pp. \bibinfo{pages}{221--239}.
\bibitemend

\bibitemstart{Ilt09}
\bibinfo{author}{K.~Iltgen} (\bibinfo{year}{2009}):
  \emph{\bibinfo{title}{Robustness of a bismulation-type faster-than
  relation}}.
\newblock \bibinfo{type}{Technical Report} \bibinfo{number}{2009-08},
  \bibinfo{institution}{Universit{\"a}t Augsburg, Germany}.
\bibitemend

\bibitemstart{AKK05}
\bibinfo{author}{A.~Kiehn} \& \bibinfo{author}{S.~Arun-Kumar}
  (\bibinfo{year}{2005}): \emph{\bibinfo{title}{Amortised Bisimulations}}.
\newblock In: {\sl \bibinfo{booktitle}{{FORTE}~2005}}, {\sl
  \bibinfo{series}{LNCS}} \bibinfo{volume}{3731}.
  \bibinfo{publisher}{Springer-Verlag}, pp. \bibinfo{pages}{185--204}.
\bibitemend

\bibitemstart{LV04}
\bibinfo{author}{G.~L{\"u}ttgen} \& \bibinfo{author}{W.~Vogler}
  (\bibinfo{year}{2004}): \emph{\bibinfo{title}{Bisimulation on speed:
  worst-case efficiency}}.
\newblock {\sl \bibinfo{journal}{Information and Computation}}
  \bibinfo{volume}{191}(\bibinfo{number}{2}), pp. \bibinfo{pages}{105--144}.
\bibitemend

\bibitemstart{Lyn96}
\bibinfo{author}{N.~Lynch} (\bibinfo{year}{1996}):
  \emph{\bibinfo{title}{Distributed Algorithms}}.
\newblock \bibinfo{publisher}{Morgan Kaufmann Publishers}.
\bibitemend

\bibitemstart{Mil89}
\bibinfo{author}{R.~Milner} (\bibinfo{year}{1989}):
  \emph{\bibinfo{title}{Communication and Concurrency}}.
\newblock \bibinfo{publisher}{Prentice Hall}.
\bibitemend

\bibitemstart{Vog2002}
\bibinfo{author}{W.~Vogler} (\bibinfo{year}{2002}):
  \emph{\bibinfo{title}{Efficiency of Asynchronous Systems, Read Arcs, and the
  {MUTEX}-problem}}.
\newblock {\sl \bibinfo{journal}{TCS}}
  \bibinfo{volume}{275}(\bibinfo{number}{1--2}), pp. \bibinfo{pages}{589--631}.
\bibitemend

\bibliographyend
\end{thebibliography}

\appendix


\end{document}